\newtheorem{definition}{Definition}
\newtheorem{theorem}{Theorem}
\newtheorem{axiom}{Axiom}
\newtheorem{example}{Example}
\newtheorem{lemma}{Lemma}
\newtheorem{case}{Case}
\newtheorem{subcase}{Case}[case]
\newtheorem{claim}{Claim}
\theoremstyle{definition}
\newtheorem{fact}{Fact}
\newcommand{\RN}[1]{%
  \textup{\uppercase\expandafter{\romannumeral#1}}%
}
\newcommand{\ex}{\backslash}
\newcommand{\ra}{\rightarrow}
\newcommand{\la}{\leftarrow}
\newcommand{\cb}{\lceil \beta\rceil}
\newcommand{\B}{\ensuremath{\mathsf{B}}}
\newcommand{\DB}{\ensuremath{\mathsf{DB}}}
\newcommand{\nul}{\ensuremath{0}}
\newcommand{\dum}{\ensuremath{\mathit{dum}}}
\newcommand{\g}{\ensuremath{\mathcal{G}}}
\newcommand{\V}{\ensuremath{V}}
\renewcommand{\d}{\ensuremath{\mathbf{d}}}
\newcommand{\D}{\ensuremath{\mathbf{D}}}
\newcommand{\DD}{\ensuremath{\Delta}}
\newcommand{\remove}[1]{}
\title{Power in Liquid Democracy}
\author{
Yuzhe Zhang\textsuperscript{\rm 1} and Davide Grossi\textsuperscript{\rm 1, \rm2}\\
} 
\begin{document}

\maketitle

\begin{abstract}
The paper develops a theory of power for delegable proxy voting systems.
We define a power index able to measure the influence of both voters and delegators.
Using this index, which we characterize axiomatically, we extend an earlier game-theoretic model by incorporating power-seeking behavior by agents. We analytically study the existence of pure strategy Nash equilibria in such a model.
Finally, by means of simulations, we study the effect of relevant parameters on the emergence of power inequalities in the model.
\end{abstract}

\section{Introduction}
Liquid  democracy \citep{blum2016liquid} is a form of proxy voting \citep{miller1969program,Tullock_1992,Alger_2006,green2015direct,cohensius2017proxy} where each proxy is delegable, thereby giving rise to so-called transitive delegations. In such a system each voter may choose to cast her vote directly, or to delegate her vote to a proxy, who may in turn decide whether to vote or delegate, and so pass the votes she has accrued further to yet another proxy. The voters who decide to retain their votes---the so-called gurus---cast their ballots, which now carry the weight given by the number of delegations they accrued. Liquid democracy has been an influential proposal in recent public debates on democratic reform across the world, thanks also to platforms for democratic decision support such as, in particular, LiquidFeedback \cite{liquid_feedback}\footnote{\url{https://liquidfeedback.org/}}. In the last couple of years it has enjoyed considerable attention from researchers in political science and eDemocracy, as well as artificial intelligence (see, for an overview, \citet{paulin20overview}).

\vspace{-0.3cm}

\paragraph{Contribution}
The starting point of our paper is a controversial feature of liquid democracy: transitive delegations may in principle lead to disproportionate accrual of power, thereby harming the democratic legitimacy of the resulting vote. To the best of our knowledge, this issue has received only limited attention. A notable exception is \cite{kling15voting}, which provided an empirical analysis of power and influence in liquid democracy based on data from the German Pirate Party\footnote{\url{https://www.piratenpartei.de/}}. However, a formal theory of power in voting systems with delegable proxy is lacking. We aim at providing such a theory here, and use it to gain insights into how power may happen to be distributed among agents involved in decision-making with liquid democracy.


{\em First}, we provide a generalization of the power index known as Banzhaf index to account for delegations in voting with quota rules. This novel index---called delegative Banzhaf index---measures not only the influence of voters, but also that of delegators. 
We characterize this index axiomatically (Theorem \ref{th:char}) and highlight how the index responds intuitively to the way in which delegations may be structured (Fact \ref{fa:todelegate}-\ref{fact:accrual}).

{\em Second}, we extend the strategic model of liquid democracy developed by \citet{bloembergen19rational} to account for power-seeking behavior by agents. In our model, agents want to vote truthfully in order to relay correct information to the mechanism, but they do so by also considering how much power they retain in the system. We carry out an equilibrium analysis (pure strategy Nash equilibrium)  of the model. We show equilibria may not exist if delegations are constrained (Theorem \ref{th:noNE}), but they do when everybody is allowed to delegate to anybody (Theorem \ref{thm:NEincomplete}).

{\em Finally}, we simulate our game theoretic model and study how two key parameters of the model influence the distribution of power both in equilibrium and after one-shot interaction. Our experiments show that limiting the level of connectivity of the underlying network has a beneficial effect in limiting the emergence of inequalities in the distribution of power (measured by Gini coefficient). Perhaps less intuitively, the extent by which agents are motivated by the accumulation of power has a similar effect: groups where agents are more power-greedy appear to achieve more equal distributions of power.

\smallskip

Proofs of the simpler results are omitted from the text, but proof sketches are provided for the more interesting results. Full proofs of all results, as well as a more detailed descriptions of our experiments, are reported in the appendix.


\paragraph{Related Work}
The idea of voting with delegable proxy can be traced back to \cite{dodgson84principles} and has been object of study in the political sciences \cite{green2015direct}.
In the last couple of years, several papers in the artificial intelligence community (and in particular the computational social choice one \cite{comsoc_handbook}) have focused on liquid democracy. Two lines of research have broadly been pursued. 
On the one hand papers have pointed to potential weakenesses of voting by liquid democracy, e.g.: delegation cycles and the failure of individual rationality in multi-issue voting \cite{christoff17binary,brill2018pairwise}; poor accuracy of group decisions as compared to those achievable via direct voting in non-strategic settings \cite{kahng18liquid,caragiannis19contribution}, as well as strategic ones \cite{bloembergen19rational}. 
On the other hand a number of papers have focused on the development of better behaved delegation schemes, e.g.: delegations with preferences over trustees \cite{brill2018pairwise} or over gurus \cite{escoffier19convergence,escoffier20iterative}; multiple delegations \cite{golz2018fluid}; complex delegations like delegations to a majority of trustees \cite{colley20smart}; dampened delegations \cite{boldi2011viscous}; breadth-first delegations \cite{Grammateia2020aamas}. 

Our paper is a contribution to the first line of research.
The possibility of large power imbalances is recognized as a potential problem for liquid democracy, although experimental work has argued the issue may be limited in practice \cite{kling15voting}.
We aim at putting the discussion on power in liquid democracy on a precise footing and gain insights into how  power imbalances may arise or be contained.


\section{Preliminaries: A Model of Liquid Democracy} 
\label{sec:graph}

Our model is based on the binary voting setting for truth-tracking \cite{Condorcet1785,grofman83thirteen,elkind16rationalizations}. The setting has already been applied to the study of liquid democracy by \citet{kahng18liquid,bloembergen19rational,caragiannis19contribution}.

\vspace{-0.3cm}

\paragraph{Binary voting by truth-tracking agents}
A finite set of agents $N=\{1,2,\dots,n\}$ has to vote on whether to accept or reject an issue. The vote is supposed to track the correct state of the world---that is whether it is `best' to accept or reject the issue. The agents' ability to make the right choice (i.e., the agents' error model) is represented by the agent's \emph{accuracy} $q_i \in (\frac{1}{2},1]$, for $i\in N$. 


We assume the result of such an election to be determined by a quota rule with quota $\beta \in (\frac{n}{2}, n]$.
That is, the issue is accepted if and only if there are at least $\beta$ agents supporting it.
We will also be working with the more general setting in which each agent is endowed with a weight.
Let $\omega: N\ra \mathbb{R}_+$ be a weight function assigning a positive weight to every agent.\footnote{Therefore, in the standard `one-voter-one-vote' setting $\omega(i) = 1$ for all $i \in N$.}
Then the quota is $\beta \in \left(\frac{\sum_{i \in N} \omega(i)}{2}, \sum_{i \in N} \omega(i)\right]$. That is, an issue is accepted if and only if the weight she collects from individual votes matches or exceeds the quota (cf. \citet{chalkiadakis12computational}).

\vspace{-0.3cm}



\paragraph{Liquid democracy elections}

When agent $i\in N$ delegates to agent $j \in N$ we write $d_i = j$. We admit the possibility for an agent to abstain by delegating to a nul agent $\nul$. This feature will be of technical use for the characterization of the power index we are going to introduce.
Then $\d=(d_1,d_2,\dots, d_n)$ is called a \emph{delegation profile} (or simply \emph{profile}) and is a vector describing each agent's delegation. Equivalently, delegation profiles can be usefully thought of as maps $\d: N\rightarrow N \cup \{ \nul \}$, where $\d(i)=d_i$.
When $d_i=i$, agent $i$ votes on her own behalf. We call such an agent a \emph{guru}. On the other hand, any agent who is not a guru, is called a {\em delegator}. For profile $\d$, and $C\subseteq N$, let $C^{\d}$ denote all gurus in $C$ in profile $\d$, i.e., $C^{\d}=\{i \in C\mid d_i=i \}$. A delegation profile in which all agents are gurus (i.e., for all $i \in N$ $d_i = i$) is said to be {\em trivial}.

\smallskip

We call a {\em liquid democracy election} (LDE) the tuple $\V = \langle N, \omega, \d, \beta\rangle$, where $N$ is the set of agents with weights according to $\omega$, ${\bf d}$ is a delegation profile, and $\beta$ is the quota. Let then $\mathbb{V}$ denote the set of all LDEs. Clearly, LDEs with trivial profiles are instances of standard weighted voting.


\paragraph{Gurus, chains and cycles}
Any profile $\d$ can also be represented by a directed graph. An edge from agent $i$ to $j$ ($i\ra j$) exists whenever $d_i=j$.
Consider then a profile $\d$ where a path exists from $i$ to $j$, i.e., $i\rightarrow k \rightarrow \dots \rightarrow h \rightarrow j$.
We call such paths \emph{delegation chains}. When such a chain from $i$ to guru $j$ exists, every agent in this delegation chain (indirectly) delegates to $j$, and we denote $i$'s guru by $d^*_i = \d^*(i) = j$. 
Additionally, the set of agents between any pair of agents on the delegation chain are called the \emph{intermediaries} between the two agents.
For example, suppose the above delegation chain occurs in profile $\d$. Then the set of intermediaries between $i$ and $j$ is $\{k, \dots, h\}$, and it is denoted by $\delta_{\d}(i,j)$. The sum of the weights of the intermediaries between two agents $i$ and $j$ and the weight of $j$, is called the \emph{delegation distance} from $i$ to $j$ and is denoted by $\DD_{\d}(i,j)=\sum_{a\in(\delta_{\d}(i,j)\cup\{j\})}\omega(a)$.
A {\em delegation cycle} is a chain where the first and last agents coincide. In such a case, no agent in the chain is linked to a guru. Therefore no agent linked via a delegation chain to an agent in a delegation cycle has a guru.
Then for any $C\subseteq N$,
we write
$
\D(C) = \{ j \in N \mid \exists k\in C, d^*_j= k\} 
$
to denote the set of agents that directly or indirectly delegate to some agent in $C$. 
If $C = \{ i \}$ we write $\D(i)$ for the set of agents who have $i$ as guru. 

One last piece of notation: we will need to consider what happens to delegation chains when we restrict to certain subsets of agents.
For instance, given the chain $i\rightarrow k \rightarrow h \rightarrow \dots \rightarrow o \rightarrow j$, if $\{i,h,\dots, o,j\}\subseteq C \subseteq N$ but $k\notin C$, then $i$ is not able to delegate to $j$ within $C$ as she has no access to intermediary $k$ in such subset. For $C \subseteq N$ we write ${\bf d}_C^*(h)= j$ to denote that $j$ is the guru of $h$ and the chain from $h$ to $j$ contains only elements of $C$. Then we write
$
\hat{\D}(C) = \{ j \in N \mid \exists k\in C, {\d}_C^*(j)= k \} \label{eq:accrual2}
$ 
for the set of agents that directly or indirectly delegate to some agent in $C$ through intermediaries contained in $C$.
Intuitively, this captures the support accrued by gurus in $C$ via agents in $C$.







\section{A Power Index for Liquid Democracy}

Once delegations are settled, liquid democracy results in weighted voting where only gurus vote with the sum of weights they accrued from direct or indirect delegations.
From a voting perspective, gurus are therefore the only agents who retain voting power after the delegation phase. However, this neglects the power that delegators actually have within liquid democracy by being able to control large number of votes. By means of a simple example: a guru $i$ obtaining $m$ direct delegations is intuitively more `powerful' than a guru obtaining $m$ delegations via an intermediary $j$, who is in turn recipient of $m-1$ direct delegations. Most of $i$'s power depends then on $j$ (see also Example \ref{ex:1} below).

So in this section we generalize the Banzhaf index \cite{penrose46elementary,banzhaf65weighted} to the delegable proxy voting setting. The Banzhaf index has already been used to study the power of gurus in liquid democracy by \citet{kling15voting}.


\subsection{Delegative Banzhaf Index: Definition}

We briefly recall the definition of the Banzhaf index. A simple game is a tuple $\g = \langle N,\nu\rangle$, where $N$ is the set of agents ($|N|=n$) and $\nu$ is the characteristic function $\nu: 2^N \ra \{ 0, 1\}$. For any $C \subseteq N$, if $\nu(C) = 1$ then $C$ is said to be {\em winning}, otherwise it is said to be {\em losing}. 
An agent $i$ is called a \emph{swing agent} for coalition $C$ if $\nu(C)-\nu(C\ex \{i\})=1$.
Then in a simple game $\g$, the Banzhaf index $\B_i(\g)$ of agent $i\in N$ is:
$
\B_i(\g)=\frac{1}{2^{n-1}}\sum_{C\subseteq N \setminus \{ i \}}(\nu(C \cup \{ i \})-\nu(C)) \label{eq:banzhaf},
$
i.e., $i$'s probability of being swing for a random coalition.

\smallskip

There is one obvious way in which an LDE $\V$ induces a simple game: it is the simple game capturing the weighted voting occurring among gurus once delegations have been fixed, i.e., $\g_\V = \langle N, \nu_\V \rangle$ where, for any $C \subseteq N$: 
\begin{align}
\nu_\V(C) = 1 & \ \mbox{iff} \ \sum_{i\in \D(C)}\omega(i) \geq \beta.
\end{align}
That is, a coalition wins whenever all gurus in it together accrue enough weight to meet the quota.
In such a game only gurus may have positive power: $i \in N^\d$ if $\B_i(\g_\V) > 0$, as $\g_\V$ is silent about the influence that delegators have in determining the winning coalitions.


The influence of delegators can be captured by a different simple game $\g'_\V = \langle N, \nu'_V \rangle$ where, for any $C \subseteq N$: 
\begin{align}
\nu'_\V(C) = 1 & \ \mbox{iff} \  \sum_{i\in \hat{\D}(C)}\omega(i) \geq \beta. 
\end{align}
That is, a coalition $C$ is winning whenever the sum of weights accrued by the gurus in $C$ {\em from agents in} $C$, meets the quota.
According to this way of constructing the simple game, an agent's weight is accrued in a coalition $C$ if the agent, her guru, and all intermediaries between them are contained in $C$.
We refer to $\g'_\V$ as the {\em delegative simple game} of LDE $\V$. Clearly, if $\d$ is trivial, all agents are gurus and therefore $\g_\V = \g_{\V'}$. 

\smallskip

So, given an LDE $\V=\langle N,\omega, \d, \beta\rangle$,  we define the delegative Banzhaf index of an agent $i$ in LDE $\V$ simply as the Banzhaf index of $i$ in the delegative simple game of $\V$:
\begin{align}
\DB_i(\V) = \B_i(\g'_\V). \label{eq:db}
\end{align}
Observe that in LDEs $\V$ where the delegation profile is trivial, and therefore games $\g(\V)$ and $\g'(\V)$ coincide, the delegative Banzhaf index of each agent 
coincides to her Banzhaf index in $\g_\V$ .

\begin{example}
\label{ex:1}
Consider two LDEs, $\V_1=\langle N, \omega,\d_1,\beta\rangle$ and $\V_2=\langle N,\omega,\d_2, \beta\rangle$, where $N=\{1,2,3,4\}$, $\omega(i)=1$ for all $i\in N$, $\beta=3$ and $\d_1$ and $\d_2$ are represented in Fig.~\ref{fig:ex11} and Fig.~\ref{fig:ex12}, respectively.

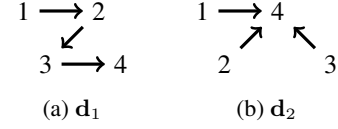
\begin{figure}[t]
\centering
\subcaptionbox
{$\d_1$\label{fig:ex11}}{\begin{tikzpicture}
\node(1){1};
\node(2)[right of=1, node distance=1cm]{2};
\node(3)[below left of=2, node distance=1cm]{3};
\node(4)[right of=3, node distance=1cm]{4};
\draw[very thick, ->](1)--(2);
\draw[very thick, ->](2)--(3);
\draw[very thick, ->](3)--(4);
\end{tikzpicture}}
\hspace{0.5cm}
\subcaptionbox
{$\d_2$\label{fig:ex12}}{\begin{tikzpicture}
\node(4){4};
\node(1)[left of=4, node distance=1cm]{1};
\node(2)[below left of=4, node distance=1cm]{2};
\node(3)[below right of=4, node distance=1cm]{3};
\draw[very thick, ->](1)--(4);
\draw[very thick, ->](2)--(4);
\draw[very thick, ->](3)--(4);
\end{tikzpicture}}
\caption{Profiles in Example~\ref{ex:1}}
\end{figure}

We focus on the indices of agents $1$ and $4$.
First consider $\V_1$.
Since no coalition $C\subseteq N$ exists with $\nu'_{\V_1}(C)=1$ and $\nu'_{\V_1}(C\setminus\{1\})=0$, $\DB_1(\V_1)=0$.
Then we compute $\DB_4(\V_1)$.
$\nu'_{\V_1}(C)=1$ and $\nu'_{\V_1}(C\setminus\{4\})=0$ iff $C\in \{\{2,3,4\},\{1,2,3,4\}\}$.
Thus $\DB_4(\V_1)=\sum_{C\subseteq N}(\nu'_{\V_1}(C)-\nu'_{\V_1}(C\setminus\{4\}))/2^3=1/4.$
Then consider $\V_2$.
First for $\DB_1(\V_2)$, $\nu'_{\V_2}(C)=1$ and $\nu'_{\V_2}(C\setminus\{1\})=0$ iff $C\in \{\{1,2,4\},\{1,3,4\}\}$.
Therefore, $\DB_1(\V_2)=\sum_{C\subseteq N}(\nu'_{\V_2}(C)-\nu'_{\V_2}(C\setminus\{1\}))/2^3=1/4.$
For $\DB_4(\V_2)$, we have that if $C\subseteq \{\{1,2,4\},\{1,3,4\},\{2,3,4\},\{1,2,3,4\}\}$, $\nu'_{\V_2}(C\cup\{4\})=1$ and $\nu'_{\V_2}(C)=0$.
Then $\DB_4(\V_2)=\sum_{C\subseteq N}(\nu'_{\V_2}(C)-\nu'_{\V_2}(C\setminus\{4\}))/2^3=1/2$.
\end{example}

As the example illustrates each agent's $\DB$ depends on the structure of the delegation profile.
For instance, in both $\V_1$ and $\V_2$, agent $1$ is a delegator with no incoming delegations, but $\DB_1(\V_1)=0$ while $\DB_1(\V_2)=1/4$.
In $\V_1$, agent $1$ is ``far" from the guru and her vote does not matter for meeting the quota.
In $\V_2$, agent $1$ delegates directly to the guru.
Similarly, in both LDEs agent $4$ collects 4 votes.
However, $\DB_4(\V_1)=1/4$ but $\DB_4(\V_2)=1/2$ since in $\V_1$, the delegation chain pointing to $4$ is long, so that agent $4$ depends on $3$ for almost all her weight.

\subsection{Characterization of $\DB$}

To underpin \eqref{eq:db} we present a characterization of the delegative Banzhaf index. We want to axiomatically identify $\DB$ among all functions
$
f: \mathbb{V} \to (N \to \mathbb{R})
$
for LDEs on $N$. To do so we borrow ideas and techniques from existing axiomatizations of the Banzhaf index for weighted voting games \cite{axiomshapley,axiomnowak,axiomlehere}. 


The strategy we follow consists in generalizing a known characterization of the Banzhaf index for standard weighted voting due to \citet{newcharacterizationbanzhaf}. We use the same axioms of that characterization (Axioms \ref{ax:maximum}-\ref{ax:sum} below), with the addition of one axiom for so-called dummy agents (Axiom \ref{ax:nul}). Crucially, however, we show how to adapt the key definitions upon which the axioms are based from the standard weighted voting setting to LDEs. This concerns in particular the definitions of composition and bloc formation (Definitions \ref{def:completeness} and \ref{def:bloc}) which play an important role in the proof. As a result one can retrieve the known characterization of the standard Banzhaf index from ours, by simply restricting to the class of LDEs where profiles are trivial, and therefore delegations do not matter.


\subsubsection{Preliminary Definitions}

We start by introducing standard definitions from the theory of simple games. Assume an LDE $\V=\langle N,\omega, \d, \beta\rangle$ be given.


\begin{definition}[Dummy Agent]
\label{df:dummy}
An agent $i\in N$ is dummy if for any $C\subseteq N$ ($i\in C$), $\nu'_\V(C)=\nu'_\V(C\setminus \{i\})$,
where $\nu'$ is the characteristic function of the delegative simple game of $\V$. Let $N^{\dum}$ denote all dummy agents.
\end{definition}
That is, an agent is dummy
whenever she cannot influence $\nu'_\V(C)$ by quitting or joining any coalition $C\subseteq N$.
It is worth observing that, in LDEs there are three ways in which an agent can be dummy: if the agent abstains (i.e., delegates to $0$); if the agent is linked by a chain to a delegation cycle; if the agent---call it $i$---is such that $\DD_\d(i,d^*_i) \ge \beta$, that is, the delegation distance between $i$ and her guru in $\d$ is larger than $\beta$. We call such an agent {\em distant} (in $\d$).


\begin{definition}[Dictator]
\label{df:dictator}
An agent $i\in N$ is a dictator if $\nu'_\V(C)=1$ if and only if $i\in C$, for any $C\subseteq N$.
\end{definition}
That is, an agent $i$ is a dictator of $\V$ whenever it belongs to all and only the winning coalitions of the delegative simple game of $\V$. In an LDE this occurs if the dictator $i$ is a guru and $\beta\in (0, \omega(i)]$, that is, $i$ meets the quota on her own.

\begin{definition}[Symmetric Agents]
\label{df:symmetric}
Any two agents $i,j\in N$ are symmetric if for all $C\subseteq N\ex \{i,j\}$, $\nu'_V(C\cup\{i\})=\nu'_V(C\cup\{j\})$.
\end{definition}
Symmetric agents are swing for exactly the same coalitions in the delegative simple game of $\V$. 
Note that a pair of symmetric agents do not necessarily have the same weight.


\begin{example}[Example~\ref{ex:1} continued]
\label{ex:2}
Consider $\V_1$ in Figure \ref{fig:ex11}. Since $\beta=3$ and the delegation distance $\Delta_{\d}(1,4)=3$, agent $1$ is a distant (and therefore dummy) agent. 
Next consider agents $1$ and $2$ (or any pair of $\{1,2,3\}$) in $\V_2$, each of whom directly delegates to agent $4$.
For any coalition $C\subseteq N\setminus \{1,2\}$, $\nu'_{\V_2}(C\cup\{1\})=\nu'_{\V_2}(C\cup\{2\})$, thus $1$ and $2$ are symmetric.
There is no dictator in Example~\ref{ex:1}.
\end{example}

The following definitions are also based on the standard theory of simple games, but are generalized in order to account for delegations.
\begin{definition}[Minimally Winning Coalition]
\label{df:minimum}
A coalition $C\subseteq N$ is a minimally winning coalition if for any $i\in \hat{\D}(C)$, $\nu'_V(C)=1$ and $\nu'_V(C\ex \{i\})=0$.
\end{definition}
That is, a coalition $C$ is minimally winning if it is winning (in the delegative simple game of $\V$), but becomes losing if any agent who is linked to a guru in $C$ via agents in $C$ is removed. So a minimally winning coalition is a coalition that contains just enough gurus with just enough support through intermediaries in the same coalition to meet the quota. It follows that no distant agent may be included in a minimally winning coalition. Notice, however, that such a coalition may contain agents that are not linked to gurus in $C$ by intermediaries in $C$ (i.e., that do not belong to $\hat{\D}(C)$) and therefore it may not be minimal w.r.t. set inclusion.


\begin{definition}[Unanimity LDE (ULDE)]
\label{df:unanimity}
$\V$ is a unanimity LDE if the quota $\beta = \sum_{i\in \D(N)}\omega(i)$. 
We call such a quota unanimity quota and denote it by $\beta^U$.
\end{definition}
That is, in a ULDE the quota equals the sum of weights of all agents who directly or indirectly delegate to gurus.



\smallskip

The last two definitions concern operations on LDEs: how to combine two LDEs into a new one; and how to build an LDE from another one by merging two agents into a so-called `bloc'.

\begin{definition}[Composition]
\label{def:completeness}
Let two LDEs $\V_1=\langle N_1,\omega_1,\d_1,\beta_1\rangle$ and $\V_2=\langle N_2,\omega_2, \d_2,\beta_2\rangle$ be given, such that for any $i\in N_1\cap N_2$, if $\d_1(i)=j$ and $j\in N_2$ (resp. $\d_2(i)=j$ and $j\in N_1$), $\d_2(i)=j$ (resp. $\d_1(i)=j$), otherwise $\d_2(i)=0$ (resp. $\d_1(i)=0$),
and $\omega_1(i)=\omega_2(i)$.
We define two new LDEs 
$\V_1\wedge \V_2=\langle N_1\cup N_2,\omega_{1\wedge 2}, \d_{1\wedge 2}, \beta_1\wedge \beta_2 \rangle$ 
and $\V_1\vee \V_2=\langle N_1\cup N_2,\omega_{1\vee 2}, \d_{1\vee 2}, \beta_1\vee \beta_2\rangle$, where:
\begin{itemize}

\item for any $i\in N_1$ (resp. $i\in N_2$), 
$\omega_{1\vee 2}(i) = \omega_{1\wedge 2}(i) = \omega_1(i)$ (resp. $\omega_{1\vee 2}(i) = \omega_{1\wedge 2}(i) = \omega_2(i)$);


\item for any $i\in N_1 \setminus N_2$ (resp. $N_2 \setminus N_1$), $\d_{1\vee 2}(i) = \d_{1\wedge 2}(i) = \d_1(i)$ (resp. $\d_{1\vee 2}(i) = \d_{1\wedge 2}(i) = \d_2(i)$), and for any $i \in N_1 \cap N_2$, if $\d_1(i) = \d_2(i)$ then $\d_{1\vee 2}(i) = \d_{1\wedge 2}(i) = \d_1(i) = \d_2(i)$, otherwise $\d_{1\vee 2}(i) = \d_{1\wedge 2}(i) = \d_k(i)$ where $k \in \{1,2\}$ and $\d_k(i) \neq 0$;

\item $\beta_1\wedge\beta_2$ (resp. $\beta_1\vee\beta_2$) is met iff $\sum_{i\in \hat{\D}_{C\cap N_1}(C\cap N_1)}\omega_1(i)\ge \beta_1$ {\em and} (resp. {\em or}) $\sum_{i\in \hat{\D}_{C\cap N_2}(C\cap N_2)}\omega_2(i)\ge \beta_2$.


\end{itemize}
\end{definition}
Two LDEs can be composed provided the delegation graphs at their intersection coincide or, if they do not, provided that this is because of one of the agents delegating outside the intersection and the other abstaining (i.e., delegating to $0$). The condition is required to guarantee the coherency of delegations in the composition. Then quotas in the composition are so defined as to guarantee that coalitions in the delegative simple game of the composition are winning iff they are winning in both, or at least one of, the delegative simple games of the LDEs (cf. proof of Lemma \ref{lemma:characterization1}).


\begin{definition}[Bloc formation] \label{def:bloc}
Given $V=\langle N,\omega, \d,\beta\rangle$ and for any $i,j\in N$ such that $d_i=j$ or $i,j\in N^{\d}$, $V'=\langle N',\omega',\d',\beta\rangle$ is called the bloc LDE joining $i$ and $j$ into a bloc $ij$, where
\begin{itemize}
\item $N'=N\ex \{i,j\}\cup \{ij\}$;
\item For $\d'$, if $d_i=j$, $d'_{ij}=d_j$, and for any $a\in N$, such that $\d_a=i$ or $\d_a=j$, $\d'_a=ij$, but if $i,j\in N^{\d}$, for any $a\in N$ such that $d_a=i$ or $d_a=j$, $d'_a=ij$;
\item $\omega'(ij)=\omega(i)+\omega(j)$. 
\end{itemize}
\end{definition}
A bloc LDE treats two agents $i$ and $j$, who are either adjacent in the delegation graph or both gurus, as one new agent $ij$.
By applying the operation in Definition~\ref{def:bloc} repeatedly, it is possible to coalesce all agents who share the same guru
into one bloc. Furthermore, any pair of delegation chains can also be joined into one bloc by joining their gurus into one bloc. Such operations play an important role in the proof of our characterization result (cf. proof of Lemma \ref{lemma:characterization2}).

\subsubsection{Axioms}
We can now introduce the axioms of our characterization.
Assume again that an LDE $\V$ is given.

We assign minimum power to dummy agents, maximum to dictators, and identical power to symmetric agents:
\begin{axiom}[No Power (\textbf{NP})]
\label{ax:nul}
If $i\in N^{dum}$, $f_i(V)=0$.
\end{axiom}
\begin{axiom}[Maximum Power (\textbf{MP})]
\label{ax:maximum}
The power index of a dictator is 1.
\end{axiom}
\begin{axiom}[Equal Treatment (\textbf{ET})]
\label{ax:equal}
For any pair of symmetric agents $i,j\in N$, $f_i(V)=f_j(V)$.
\end{axiom}

The last two axioms concern how the index should behave with respect to composition and bloc formation.
\begin{axiom}[Bloc Principle (\textbf{BP})]
\label{ax:bloc}
For any two agents $i,j\in N$ such that $d_i=j$, or $i,j\in N^{\d}$, let $V'$ be the bloc LDE by joining $i$ and $j$ into bloc $ij$. Then $f_{ij}(V') = f_i(V)+f_j(V)$.
\end{axiom}
\begin{axiom}[Sum Principle (\textbf{SP})]
\label{ax:sum}
For any pair of LDEs $V_1,V_2\in \mathbb{V}$, such that any $i\in N_1\cap N_2$ satisfies the condition in Definition~\ref{def:completeness},
$f_i(V_1\wedge V_2)+f_i(V_1\vee V_2)=f_i(V_1)+f_i(V_2)$ for any $i\in N_1\cup N_2$.
\end{axiom}
Intuitively,  Axiom~\ref{ax:bloc} requires the power of the bloc, which is formed by two gurus or two adjacent agents on a delegation chain, to be the sum of their individual powers.
Axiom~\ref{ax:sum} requires that the sum of any agent's power in $V_1\wedge V_2$ and $V_1\vee V_2$ be the sum of her power in $\V_1$ and $\V_2$.

\subsubsection{Characterization}

The result is based on two lemmas.

\begin{lemma} \label{lemma:characterization1}
A power index $f$ for liquid democracy satisfies \textbf{MP}, \textbf{NP}, \textbf{SP}, \textbf{ET}, and \textbf{BP} if it is $\DB$.
\end{lemma}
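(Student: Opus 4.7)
The plan is to verify that $\DB$ satisfies each of the five axioms, exploiting the definition $\DB_i(\V) = \B_i(\g'_\V)$. Three of the axioms pose no real difficulty. For \textbf{NP}, if $i \in N^{\dum}$ then by Definition~\ref{df:dummy} every marginal contribution $\nu'_\V(C \cup \{i\}) - \nu'_\V(C)$ vanishes, so $\DB_i(\V)=0$. For \textbf{MP}, a dictator has every marginal contribution equal to $1$, and the $1/2^{n-1}$ normaliser yields $\DB_i(\V)=1$. For \textbf{ET}, one pairs each coalition $C \cup \{i\}$ with $C \cup \{j\}$ for $C \subseteq N\setminus\{i,j\}$; symmetry gives equality of marginal contributions, and a standard bijection over the Banzhaf sums delivers $\DB_i(\V)=\DB_j(\V)$.

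For \textbf{BP} the key step is to show that, given $V'$ the bloc LDE obtained by merging $i,j$ (with either $d_i=j$ or $i,j\in N^{\d}$), one has $\nu'_{V'}(C \cup \{ij\}) = \nu'_V(C \cup \{i,j\})$ and $\nu'_{V'}(C) = \nu'_V(C)$ for every $C \subseteq N\setminus\{i,j\}$. Intuitively, contracting the edge $i \to j$ or fusing two co-gurus preserves the delegation closure $\hat{\D}$ and the aggregate weight, so the two quota conditions agree; the hypothesis of the axiom is precisely what makes this identification work, because an arbitrary merger could damage incoming chains. Granted this, the bloc's marginal contribution telescopes,
\begin{align*}
\nu'_V(C\cup\{i,j\}) - \nu'_V(C) = [\nu'_V(C\cup\{i,j\}) - \nu'_V(C\cup\{i\})] + [\nu'_V(C\cup\{i\}) - \nu'_V(C)],
\end{align*}
and summing this identity together with its counterpart obtained by exchanging $i$ and $j$ over all $C \subseteq N\setminus\{i,j\}$ yields twice the total bloc swings on the left and the unnormalised Banzhaf swings of $i$ and $j$ on the right. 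The shift of normaliser from $1/2^{n-1}$ for $\DB_i(V), \DB_j(V)$ to $1/2^{n-2}$ for $\DB_{ij}(V')$ absorbs the factor of two and delivers $\DB_{ij}(V') = \DB_i(V) + \DB_j(V)$.

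The main obstacle is \textbf{SP}. I would first establish the pointwise identity
\begin{align*}
\nu'_{V_1 \wedge V_2}(C) + \nu'_{V_1 \vee V_2}(C) = \nu'_{V_1}(C) + \nu'_{V_2}(C)
\end{align*}
for every $C \subseteq N_1 \cup N_2$. By the design of Definition~\ref{def:completeness}, the left-hand characteristic values are the conjunction and the disjunction of the two quota conditions $\sum_{a \in \hat{\D}_{C\cap N_k}(C \cap N_k)}\omega_k(a) \ge \beta_k$, so the identity reduces to the elementary $\min(x,y)+\max(x,y)=x+y$ with $x,y\in\{0,1\}$. The subtle point, where the compatibility condition on $N_1 \cap N_2$ from Definition~\ref{def:completeness} is used, is checking that the restricted delegation closures $\hat{\D}_{C\cap N_k}(C\cap N_k)$ appearing inside the composed LDEs coincide with what one computes inside $V_k$ on its own; without the coherency requirement on delegations at the intersection, marginal contributions in $V_1\wedge V_2$ and $V_1\vee V_2$ could not be aligned with those in $V_1,V_2$. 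Once the pointwise identity is in hand, forming the marginal contribution of each $i\in N_1\cup N_2$ and summing over coalitions, with the same normaliser $1/2^{n-1}$ on both sides, immediately gives $\DB_i(V_1 \wedge V_2) + \DB_i(V_1 \vee V_2) = \DB_i(V_1) + \DB_i(V_2)$.
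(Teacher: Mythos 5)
Your verification of \textbf{NP}, \textbf{MP}, \textbf{ET} and \textbf{BP} matches the paper's proof in substance; for \textbf{BP} you are in fact more explicit than the paper, which silently uses the identification $\nu'_{V'}(C\cup\{ij\})=\nu'_V(C\cup\{i,j\})$ and $\nu'_{V'}(C)=\nu'_V(C)$ that you single out as the key step, and your telescoping-plus-renormalisation algebra is the same computation written in the other direction. For \textbf{SP} you take a genuinely different route: the paper first proves the and/or characterisation of $\nu'_{\V_1\wedge\V_2}$ and $\nu'_{\V_1\vee\V_2}$ and then counts swings case by case ($i\in N_1\setminus N_2$, $i\in N_2\setminus N_1$, $i\in N_1\cap N_2$), using $m_i^{V_1\vee V_2}=m_i^{V_1}2^{|N_2-N_1|}+m_i^{V_2}2^{|N_1-N_2|}-m_i^{V_1\wedge V_2}$; you instead use the pointwise identity $\nu'_{V_1\wedge V_2}(C)+\nu'_{V_1\vee V_2}(C)=\nu'_{V_1}(C\cap N_1)+\nu'_{V_2}(C\cap N_2)$ (note the restrictions to $N_1$, $N_2$, which your display omits) and then sum marginal contributions, which is cleaner and avoids the case split.

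There is, however, one step you must repair: the closing claim that the conclusion follows ``with the same normaliser $1/2^{n-1}$ on both sides'' is not right as stated, and this is exactly the point the paper's multiplicities handle. Summing your identity's marginal contributions over all $C\subseteq(N_1\cup N_2)\setminus\{i\}$ and dividing by $2^{|N_1\cup N_2|-1}$ gives $\DB_i(V_1\wedge V_2)+\DB_i(V_1\vee V_2)$ on the left, but $\DB_i(V_1)$ and $\DB_i(V_2)$ are normalised by $2^{|N_1|-1}$ and $2^{|N_2|-1}$, not by $2^{|N_1\cup N_2|-1}$. You need the extra counting observation that, for $i\in N_1$, each $D\subseteq N_1\setminus\{i\}$ occurs as $C\cap N_1$ for exactly $2^{|N_2\setminus N_1|}$ coalitions $C$, so the right-hand sum equals $2^{|N_2\setminus N_1|}\cdot 2^{|N_1|-1}\DB_i(V_1)$ plus the analogous term for $V_2$ (and the term for $V_k$ vanishes when $i\notin N_k$); only after this does division by $2^{|N_1\cup N_2|-1}$ yield $\DB_i(V_1)+\DB_i(V_2)$. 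With that multiplicity argument inserted, your proof of \textbf{SP} is complete and equivalent in strength to the paper's.
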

\begin{proof}[Proof sketch]
We prove the claim for \textbf{SP} as it provides a nice illustration of the workings of our definitions. The other cases are provided in the appendix. 
To show that $\DB$ satisfies the \textbf{SP}, one first has to show that by the way in which weights $\beta_1 \land \beta_2$ and $\beta_1 \vee \beta_2$ are set in Definition \ref{def:completeness}, we have that for any coalition $C\subseteq N_1\cup N_2$, $\nu'_{\V_1\wedge \V_2}(C)=1$ iff $\nu'_{\V_1}(C\cap N_1)=1$ {\em and} $\nu'_{\V_2}(C\cap N_2)=1$, and $\nu'_{\V_1\vee\V_2}(C)=1$ iff $\nu'_{\V_1}(C\cap N_1)=1$ {\em or} $\nu'_{\V_2}(C\cap N_2)=1$. The proof can then proceed with a standard argument.

We first consider any $i\in N_1-N_2$, i.e., agent $i$ is contained in $N_1$ but not in $N_2$.
Let $m_i^V$ denote the number of times that agent $i$ is swing in the delegative simple game for $\V$, i.e., $m_i^V=|\{C\subseteq N\setminus \{i\}\mid \nu'_V(C)=0, \nu'_V(C\cup \{i\})=1\}|$.
Then, if $i$ is swing in $C\subseteq N_1$ in LDE $V_1$, she is also swing in $(C\cup C')\cap N_1$, for any $C'\subseteq N_2-N_1$.
Therefore, in LDE $V_1\vee V_2$, 
$
m_i^{V_1\vee V_2}=m_i^{V_1}2^{|N_2-N_1|},
$ 
where $m_i^{V_1\vee V_2}$ is the number of times that $i$ is swing in LDE $V_1\vee V_2$.
Additionally, since $i\in N_1-N_2$, $m_i^{V_2}=0$, that is, $i$ cannot be swing in LDE $V_2$, which implies $m_i^{\V_1\wedge\V_2}=0$.
Hence we have for $i\in N_1-N_2$, 
$
m_i^{V_1\vee V_2}=m_i^{V_1}2^{|N_2-N_1|}+m_i^{V_2}2^{|N_1-N_2|}-m_i^{V_1\wedge V_2}.
$
Identical equations can be developed for agent $i\in N_2-N_1$ or $i\in N_1\cap N_2$.
We then divide each side of the equation by $2^{|N_1\cup N_2|-1}$ and obtain that, for any $i\in N_1\cup N_2$,
$
\frac{m_i^{V_1\vee V_2}}{2^{|N_1\cup N_2|-1}}=\frac{m_i^{V_1}}{2^{|N_1|-1}}+\frac{m_i^{V_2}}{2^{|N_2|-1}}-\frac{m_i^{V_1\wedge V_2}}{2^{|N_1\cup N_2|-1}},
$
which implies that $\DB_i(V_1\wedge V_2)+\DB_i(V_1\vee V_2)=\DB_i(V_1)+\DB_i(V_2)$.
\end{proof}




\begin{lemma} \label{lemma:characterization2}
A power index $f$ for LDEs satisfies \textbf{MP}, \textbf{NP}, \textbf{SP}, \textbf{ET}, and \textbf{BP}, only if it is $\DB$.
\end{lemma}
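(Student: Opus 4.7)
The plan is to prove uniqueness in two stages: first pin down $f$ on Unanimity LDEs using \textbf{NP}, \textbf{MP}, \textbf{ET}, and \textbf{BP}, then extend to arbitrary LDEs via \textbf{SP}.

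Stage 1 targets $f = \DB$ on every ULDE $V$. The key observation is that with $\beta = \beta^U = \sum_{k\in \D(N)}\omega(k)$, the winning coalitions of $\g'_\V$ are precisely those $C$ with $\D(N)\subseteq C$; consequently, for any two non-dummy agents $i,j\in \D(N)\setminus N^{\dum}$ and any $C\subseteq N\setminus\{i,j\}$, both $\nu'_\V(C\cup\{i\})$ and $\nu'_\V(C\cup\{j\})$ vanish (each is missing the other agent from $\D(N)$), so $i$ and $j$ are symmetric in $\g'_\V$. By \textbf{ET} all non-dummies share a common value $c = f_i(V)$, while dummies get $0$ by \textbf{NP}. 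I then determine $c$ by induction on $k := |\D(N)\setminus N^{\dum}|$: for $k=1$ the sole non-dummy is a dictator, so \textbf{MP} forces $c = 1 = \DB$; for $k\geq 2$ I pick two non-dummies that can be merged under \textbf{BP}---either two adjacent agents on a chain, or two gurus---and merge them into a bloc $ij$, obtaining a new LDE $V'$ that is again a ULDE with $k-1$ non-dummies (weights still sum to $\beta$ and non-distantness is inherited). The inductive hypothesis then gives $f_{ij}(V') = \DB_{ij}(V') = 1/2^{k-2}$, and \textbf{BP} yields $2c = 1/2^{k-2}$, i.e.\ $c = 1/2^{k-1} = \DB_i(V)$.

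Stage 2 extends the identity $f = \DB$ from ULDEs to arbitrary LDEs using \textbf{SP}. The strategy mirrors the classical Lehrer / Dubey--Shapley argument for the Banzhaf index: I write $V$ as a $\wedge$/$\vee$-composition of suitably chosen ULDEs whose delegation profiles are pairwise compatible in the sense of Definition~\ref{def:completeness}. Concretely, to each minimally winning coalition $M$ of $\g'_\V$ I associate an auxiliary ULDE $U_M$ built from $V$ by keeping the delegations on $M$ (abstaining elsewhere) and raising the quota to $\beta^U(U_M)$; iterated application of $f(V_1\wedge V_2) + f(V_1\vee V_2) = f(V_1) + f(V_2)$ then expresses $f(V)$ as an alternating sum of $f$-values on the $U_M$ and their intermediate compositions. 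Since $\DB$ also satisfies \textbf{SP} (by Lemma~\ref{lemma:characterization1}) and both $f$ and $\DB$ agree on every ULDE arising in the decomposition (by Stage 1), they must agree on $V$.

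The main obstacle will be Stage 2: the compatibility conditions of Definition~\ref{def:completeness} sharply restrict which LDEs can be composed, so the ULDE decomposition is not the free inclusion--exclusion over MWCs familiar from the classical Banzhaf case; it must be engineered so that every intermediate composition satisfies those constraints. I expect the induction to run on a complexity measure such as the number of MWCs of $V$ or $\beta^U - \beta$, with the auxiliary ULDEs constructed carefully to respect the overlap conditions at each step; managing this bookkeeping is the main technical burden, and the full details naturally belong in the appendix.
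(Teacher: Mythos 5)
Your proposal follows essentially the same two-stage route as the paper's proof: first pin down $f$ on unanimity LDEs using \textbf{NP}, \textbf{MP}, \textbf{ET}, \textbf{BP} and an induction, then write an arbitrary LDE as the $\vee$-composition of the ULDEs induced by its minimally winning coalitions and close the induction with \textbf{SP}. Your Stage~1 is in fact a slight streamlining (in a ULDE all non-dummy agents are pairwise symmetric, so \textbf{ET} plus a single \textbf{BP} merge and induction on the number of non-dummies suffices, avoiding the paper's case analysis), and the Stage~2 bookkeeping you defer is resolved in the paper exactly along the lines you anticipate, via $(\V_1\vee\dots\vee\V_k)\wedge\V_{k+1}=(\V_1\wedge\V_{k+1})\vee\dots\vee(\V_k\wedge\V_{k+1})$ together with the observation that each $\V_j\wedge\V_{k+1}$ is again a ULDE, so the induction on the number of minimally winning coalitions goes through.
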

\begin{proof}[Proof sketch]
The proof consists of two claims: \fbox{{\em claim 1}} if $f$ is $\DB$ for any ULDE, it is $\DB$ for any LDE; \fbox{{\em claim 2}} $f$ is $\DB$ for any ULDE.

To prove \fbox{{\em claim 1}}, observe that any LDE $\V$ can be represented as the disjunction of $m$ ULDEs, i.e., $\V=\V_1\vee \dots, \V_m$, given $\V$ has $m$ minimally winning coalitions $\{C_1,\dots,C_m\}$, where $\V_i=\langle C_i,\omega, \d_{C_i},\beta^U\rangle$.
By induction over the number of disjunction ULDEs we show then that if $f$ is $\DB$ for the disjunction of any $k$ ($1\le k< m$) ULDEs, it is also $\DB$ for the disjunction of any $k+1$ ULDEs.
The claim holds by \textbf{SP}, since $f_i(\V_1\vee\dots\vee\V_{k+1})=f_i(\V_1\vee\dots\vee\V_k)+f_i(\V_{k+1})-f_i((\V_1\vee\dots\V_k)\wedge\V_{k+1})$, where $f$ is $\DB$ for $\V_1\vee\dots\vee\V_k$ and $\V_{k+1}$ by assumption, as well as for $(\V_1\vee\dots\V_k)\wedge\V_{k+1}$ because it is also disjunction of $k$ ULDEs: $(\V_1\vee\dots\V_k)\wedge\V_{k+1}=(\V_1\wedge\V_{k+1})\vee\dots(\V_k\wedge\V_{k+1})$.

To prove \fbox{{\em claim 2}} consider an ULDE $\V_j$. We need to show that $f_i(\V_j)=1/2^{n_j-1}$ for any non-dummy agent, and $f_i(\V_j)=0$ for any dummy agent, where $n_j=|C_j\setminus C_j^{dum}|$.
The proof is conducted by induction on $|C_j|$.
As the basis, $f_i(\V_j)=1$ if $|C_j|=1$ by \textbf{MP} since $i$ is a dictator. 
Assume then that the claim holds when $|C_j|=k$, we show it also holds if $|C_j|=k+1$. 
For any non-dummy agent, if only one non-dummy agent exists, the claim is obvious by \textbf{MP}.
If $|C_j\setminus C_j^{dum}|\ge 2$, we exploit \textbf{BP} to join two gurus, or a delegator with her trustee, into a bloc, then obtain an LDE with $k$ agents where the hypothesis holds.
Then the claim follows by \textbf{ET}.
For dummy agents the claim is proven by exploiting \textbf{NP}.
\end{proof}

\begin{theorem} \label{th:char}
A power index for liquid democracy satisfies \textbf{MP}, \textbf{NP}, \textbf{SP}, \textbf{ET}, and \textbf{BP}, if and only if it is \DB.
\end{theorem}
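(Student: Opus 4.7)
The plan is to observe that the theorem is nothing more than the conjunction of Lemma~\ref{lemma:characterization1} and Lemma~\ref{lemma:characterization2}, so the proof is a one-line glueing argument: for the ``if'' direction invoke Lemma~\ref{lemma:characterization1} to conclude that $\DB$ satisfies \textbf{MP}, \textbf{NP}, \textbf{SP}, \textbf{ET}, and \textbf{BP}; for the ``only if'' direction invoke Lemma~\ref{lemma:characterization2} to conclude that any $f$ satisfying these five axioms coincides with $\DB$ on every LDE in $\mathbb{V}$. Since both lemmas are already stated and sketched, nothing further is required at the level of the theorem itself.

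If I wanted to make the argument a little more self-contained, I would briefly remind the reader of the skeleton of the uniqueness direction, since that is where the work happens: any LDE decomposes as a disjunction of ULDEs indexed by its minimally winning coalitions, \textbf{SP} together with induction propagates agreement of $f$ with $\DB$ along such disjunctions, and the base step---agreement of $f$ with $\DB$ on a single ULDE---is in turn handled by induction on the number of agents, with \textbf{MP} providing the base case, \textbf{BP} collapsing two gurus or an adjacent delegator--trustee pair into a bloc in order to invoke the inductive hypothesis, \textbf{ET} distributing the bloc's value equally among its constituents, and \textbf{NP} disposing of dummies.

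There is no real new obstacle to overcome here: all of the combinatorial and algebraic work has already been carried out in the two lemmas. The only thing worth checking is that the two lemmas together cover exactly the class $\mathbb{V}$ of LDEs on $N$ (including those with abstentions, distant agents, and cycles), which is immediate because dummy agents of all three kinds identified after Definition~\ref{df:dummy} are handled uniformly by \textbf{NP} and do not participate in the inductive decomposition into ULDEs.

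\begin{proof}
Immediate from Lemma~\ref{lemma:characterization1} and Lemma~\ref{lemma:characterization2}.
\end{proof}
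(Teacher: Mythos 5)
Your proof is correct and matches the paper exactly: the paper's own proof of Theorem~\ref{th:char} is precisely the one-line combination of Lemma~\ref{lemma:characterization1} (sufficiency) and Lemma~\ref{lemma:characterization2} (necessity). The additional recap of the uniqueness argument is accurate but not needed at the level of the theorem.
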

\begin{proof}
It follows from Lemma \ref{lemma:characterization1} and Lemma \ref{lemma:characterization2}.
\end{proof}


\subsubsection{Further Properties of $\DB$}
Besides the above axioms, it is worth mentioning a few other properties of the index that highlight its dependence on the delegation graph.


\begin{fact}[Delegation \& Power Loss]
\label{fa:todelegate}
For any pair of LDEs $\V=\langle N, \omega,\d, \beta \rangle$ and $\V'=\langle N, \omega,\d', \beta \rangle$, such that $\d'=(\d_{-i},d'_i)$, $\d(i)=i$ and $d'_i=j$ ($i\not=j$), we have that $\DB_i(\V)\ge \DB_i(\V')$.
\end{fact}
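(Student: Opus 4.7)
My plan is to compare swing coalitions for $i$ under the two profiles and show that every coalition making $i$ swing in $\V'$ also makes $i$ swing in $\V$. Since $\DB_i(\V)=\B_i(\g'_\V)=\frac{1}{2^{n-1}}\, m_i^{\V}$, where $m_i^{\V}$ denotes the number of swing coalitions of $i$ in $\g'_\V$, such an inclusion of swing-coalition sets immediately yields $\DB_i(\V)\geq\DB_i(\V')$. If $i$ has no guru in $\V'$ (that is, $j=0$ or $i$ lies on a delegation cycle of $\d'$) then $i$ is dummy in $\V'$ and the inequality is trivial by Axiom \textbf{NP}, so we may assume $i$ has a guru $g\neq i$ in $\V'$ with delegation chain $i\to j\to\cdots\to g$.

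The core of the argument is a pair of monotonicity statements about the accrual map $\hat{\D}$. \emph{First}, for every $C\subseteq N$ one has $\hat{\D}_{\V'}(C)\subseteq \hat{\D}_{\V}(C)$: take $a\in\hat{\D}_{\V'}(C)$ and distinguish whether her chain in $\V'$ traverses $i$. If it does not, the chain is identical in $\V$ (only $i$'s out-edge differs), so $a\in\hat{\D}_{\V}(C)$. If it does, then $a$'s chain in $\V'$ is of the form $a\to\cdots\to i\to j\to\cdots\to g$, entirely contained in $C$ with $g\in C$; since $i$ is a guru in $\V$, $a$'s chain in $\V$ is the prefix $a\to\cdots\to i$, still contained in $C$ and ending at a guru ($i$) in $C$, so $a\in\hat{\D}_{\V}(C)$. \emph{Second}, when $i\notin C$ the inclusion is in fact an equality: no chain in $C$ under either profile can traverse $i$, because a $\V'$-chain lying in $C$ that passes through $i$ would require $i\in C$, while in $\V$ any chain through $i$ terminates at $i$ (a guru), forcing $i\in C$. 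Hence chains inside $C$ under $\V$ and under $\V'$ coincide when $i\notin C$.

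Combining these, $\nu'_{\V}(C)\geq\nu'_{\V'}(C)$ for every $C$, with equality whenever $i\notin C$. Now let $C$ witness that $i$ is swing in $\V'$: $i\in C$, $\nu'_{\V'}(C)=1$ and $\nu'_{\V'}(C\setminus\{i\})=0$. The first monotonicity statement upgrades $\nu'_{\V}(C)\geq 1$ to $\nu'_{\V}(C)=1$, while the second, applied to $C\setminus\{i\}$, preserves $\nu'_{\V}(C\setminus\{i\})=\nu'_{\V'}(C\setminus\{i\})=0$. Thus $i$ is swing on the same $C$ in $\V$, giving $m_i^{\V'}\leq m_i^{\V}$ and hence the claim. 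I expect the main obstacle to be the chain-tracking bookkeeping behind the first monotonicity claim --- specifically, verifying that when an agent's chain in $\V'$ passes through $i$, the shorter chain in $\V$ still terminates at a guru located inside $C$, and handling the degenerate cases in which $i$ has no guru under $\V'$. Once that is settled, the reduction to a comparison of swing counts is immediate.
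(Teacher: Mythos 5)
Your proof is correct and takes essentially the same route as the paper: both arguments reduce the claim to a coalition-by-coalition comparison of swing events, using that the accrual over $C\setminus\{i\}$ is unaffected by $i$'s change of delegation and that a coalition winning under $\d'$ is still winning under $\d$. Your way of getting the latter step, via the inclusion $\hat{\D}'(C)\subseteq\hat{\D}(C)$ (tracking chains that pass through $i$ and truncating them at $i$, who is a guru in $\V$), is if anything a slightly more careful rendering than the paper's weight arithmetic, but the underlying argument is the same.
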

That is, delegations never lead to an increase in power for the delegator. In fact one can show that the inequality $\DB_i(\V)\ge \DB_i(\V')$ can be strict. 

The last two facts show that agents closer to the guru have more power and that, power-wise, delegating directly to a guru is better than doing that indirectly.

\begin{fact}[Power Monotonicity]
\label{fa:powermonotonicity}
For any pair of agents $i,j\in N$, such that $d_i=j$, $\DB_i(\V)\le \DB_j(\V)$.
\end{fact}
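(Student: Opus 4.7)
The plan is to reformulate each $\DB$ value as counting swing coalitions and then exhibit an inclusion of the set of swing coalitions for $i$ into that for $j$. Rewriting $\DB_i(\V)$ via the change of variables $C\mapsto C\cup\{i\}$, we have
\[
\DB_i(\V)\;=\;\frac{|S_i|}{2^{n-1}},\qquad S_i=\{C\subseteq N : i\in C,\ \nu'_\V(C)=1,\ \nu'_\V(C\setminus\{i\})=0\},
\]
and analogously for $S_j$. Hence the fact reduces to showing $S_i\subseteq S_j$.

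Given $C\in S_i$, I would verify the three membership conditions for $S_j$ in turn. First, to see $j\in C$, observe that since $\nu'_\V$ is monotone in $C$, the swing condition forces $\hat\D(C\setminus\{i\})\subsetneq\hat\D(C)$, so some $a\in\hat\D(C)\setminus\hat\D(C\setminus\{i\})$ exists whose chain to its guru passes through $i$ (either $a=i$ or $i$ is an intermediary) with every node of that chain in $C$. Because $d_i=j$, the very next node after $i$ on this chain is $j$, and that node lies in $C$.

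Second, $\nu'_\V(C)=1$ is already given. Third, and the heart of the argument, I want to show $\nu'_\V(C\setminus\{j\})=0$. For this I claim $\hat\D(C\setminus\{j\})\subseteq\hat\D(C\setminus\{i\})$: take $a\in\hat\D(C\setminus\{j\})$ witnessed by a chain $a\to\cdots\to g$ with $g$ a guru in $C\setminus\{j\}$ and every intermediary in $C\setminus\{j\}$. Since $d_i=j$, the presence of $i$ on this chain (either as $a$ or as an intermediary) would force $j$ to appear on it as the immediate successor of $i$, contradicting $j\notin C\setminus\{j\}$; and $i\ne g$ since $g$ is a guru while $i$ is a delegator. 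Thus the chain avoids $i$, so it lies entirely in $C\setminus\{i\}$, giving $a\in\hat\D(C\setminus\{i\})$. Summing weights, $\nu'_\V(C\setminus\{j\})\le\nu'_\V(C\setminus\{i\})=0$.

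These three points yield $C\in S_j$, so $S_i\subseteq S_j$ and hence $\DB_i(\V)\le\DB_j(\V)$. The only delicate step is the chain surgery in the third point—ensuring that removing $j$ cannot strictly enlarge $\hat\D$ relative to removing $i$—but the direct-delegation hypothesis $d_i=j$ makes this immediate, since every chain that survives the deletion of $j$ is forced to avoid $i$ as well.
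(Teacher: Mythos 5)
Your proof is correct, and it is organized differently from the paper's. The paper's argument is a marginal-contribution computation: it expands $2^{n-1}(\DB_j(\V)-\DB_i(\V))$ over coalitions $C\subseteq N\setminus\{i,j\}$, cancels terms to obtain $2\sum_{C\subseteq N\setminus\{i,j\}}\bigl(\nu'_\V(C\cup\{j\})-\nu'_\V(C\cup\{i\})\bigr)$, and then shows each summand is nonnegative by noting that when $j\notin C$ agent $i$ contributes nothing, i.e.\ $\hat{\D}(C\cup\{i\})=\hat{\D}(C)$, so $\nu'_\V(C\cup\{i\})=1$ forces $\nu'_\V(C\cup\{j\})=1$. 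You instead prove the swing-set inclusion $S_i\subseteq S_j$ directly, which needs two ingredients the paper never has to supply: the observation (correctly argued via the strict shrinkage of $\hat{\D}$ when $i$ is removed) that $j$ must belong to every coalition for which $i$ is swing, and the chain-surgery inclusion $\hat{\D}(C\setminus\{j\})\subseteq\hat{\D}(C\setminus\{i\})$. Both arguments ultimately rest on the same structural fact—any delegation chain using $i$ must immediately continue through $j=d_i$—but your version delivers the slightly stronger coalition-by-coalition containment of swing families, whereas the paper's pairing over $C\subseteq N\setminus\{i,j\}$ is the more routine desirability-style computation and sidesteps the need to show $j\in C$. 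One cosmetic remark: the strict inclusion $\hat{\D}(C\setminus\{i\})\subsetneq\hat{\D}(C)$ follows from monotonicity of $\hat{\D}$ under set inclusion (not of $\nu'_\V$ itself) combined with $\nu'_\V(C)\neq\nu'_\V(C\setminus\{i\})$ and positivity of weights; this is clearly what you intend, and the step is sound.
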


\begin{fact}[Direct vs. Indirect Delegation] \label{fact:accrual}
\label{fa:forward}
Let $\V$ be a LDE, in which there exists three agents $i,j,k\in N$, such that $\d(i)=j$ and $\d(k)=i$.
Let then $\V'=\langle N,\omega, \d',\beta\rangle$, such that $\d'=(\d_{-k},d'_k)$ where $d'_k=j$.
Then, $\DB_k(\V')\ge \DB_k(\V)$.
\end{fact}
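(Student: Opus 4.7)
The plan is to compare $k$'s swing coalitions in the delegative simple games $\g'_\V$ and $\g'_{\V'}$ and show the former embed into the latter; the inequality will then follow directly from~\eqref{eq:db}.

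If $\DB_k(\V) = 0$ the statement is trivial, so I would assume $\DB_k(\V) > 0$; then $k$ is not dummy in $\V$ and has a guru $g = d^*_k$. Write $k$'s chain in $\V$ as $k \ra i \ra j \ra h_1 \ra \dots \ra h_r \ra g$ and set $P = \{j, h_1, \dots, h_r, g\}$. Since $\V'$ only replaces the edge $k \ra i$ by $k \ra j$, the chain from $j$ to $g$ is unchanged, so in $\V'$ the chain of $k$ is $k \ra j \ra h_1 \ra \dots \ra h_r \ra g$, differing from the $\V$-chain only in that $i$ is skipped. Let $R_k$ denote the set of agents whose delegation chain passes through $k$ (including $k$ itself). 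Since the edges \emph{into} $k$ are untouched, $R_k$ is the same in both LDEs, and for any $C \subseteq N$ so is the subset $R_k(C) = \{a \in R_k : \text{the chain from } a \text{ to } k \text{ lies in } C \cup \{k\}\}$.

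Next I would establish the structural claim: for every $C \subseteq N \setminus \{k\}$, $\hat{\D}_\V(C) = \hat{\D}_{\V'}(C)$. Agents in $R_k$ need $k$ on their chain-through-$C$ and so contribute to neither set; agents outside $R_k$ have identical delegation chains in $\V$ and $\V'$ (the modified edge never appears on them), so they appear in both sets under the same condition. From this I would derive that $\hat{\D}(C \cup \{k\}) \setminus \hat{\D}(C)$ is contained in $R_k$ and equals $R_k(C)$ exactly when the remainder of $k$'s chain lies in $C$, i.e.\ $\{i\} \cup P \subseteq C$ in $\V$ and $P \subseteq C$ in $\V'$; otherwise the difference is empty.

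Putting the pieces together, $k$ is swing for $C$ in $\g'_\V$ iff $\{i\} \cup P \subseteq C$ and the added weight $\sum_{a \in R_k(C)} \omega(a)$ pushes $\sum_{m \in \hat{\D}_\V(C)} \omega(m)$ over the quota $\beta$; and $k$ is swing for $C$ in $\g'_{\V'}$ iff $P \subseteq C$ together with the \emph{same} weight condition, by the structural claim. The first condition is strictly weaker in $\V'$ while the second is unchanged, so every $\V$-swing coalition for $k$ is a $\V'$-swing coalition, yielding $\DB_k(\V) \le \DB_k(\V')$.

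The main obstacle I anticipate is the careful verification of the structural claim $\hat{\D}_\V(C) = \hat{\D}_{\V'}(C)$, in particular confirming that no agent outside $R_k$ has her chain-through-$C$ affected by swapping the edge $k \ra i$ for $k \ra j$. Once that is pinned down, the swing-set inclusion is essentially a direct transcription of the definitions.
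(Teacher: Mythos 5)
Your proposal is correct and takes essentially the same route as the paper's proof: both show that every coalition for which $k$ is swing in $\V$ remains a swing coalition for $k$ in $\V'$, since swingness forces the whole downstream chain (including $i$, $j$ and the guru) into the coalition, after which the relevant sets $\hat{\D}(\cdot)$ are unchanged when the edge $k\ra i$ is replaced by $k \ra j$. Your extra scaffolding ($R_k$, $R_k(C)$ and the explicit swing characterization) just spells out what the paper's shorter argument uses implicitly.
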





\section{A Game-theoretic Model}
 
We will now use the \DB~index to extend the game-theoretic model of liquid democracy by~\citet{bloembergen19rational}, referred to as delegation game. Like in that model, we will be assuming that delegations are constrained by an underlying social network represented by a directed graph $\langle N, E \rangle$,
where each agent is a node in the network and for any $i,j\in N$ ($i\not= j$), if there is an edge from $i$ to $j$, i.e. $(i,j)\in E$, $j$ is called a \emph{neighbor} of $i$. Let $E(i)$ denote all neighbors of agent $i$, i.e., $E(i)=\{j\in N \mid (i,j)\in E \}$.

\paragraph{Delegation Games}
In the delegation game by~\citet{bloembergen19rational} agents' payoffs in an LDE depend solely on the accuracy of their gurus, and the effort agents incur should they vote directly. Here we abstract from the effort element of the model and focus instead on incorporating a power-seeking element in agents' utilities.
The key intuition behind  our extension is to model agents that are not only interested in voting accurately, but also in their own influence during the vote. So our agents choose their delegations by aiming at maximizing the trade-off between pursuing high accuracy and seeking more power.
\begin{definition} \label{def:games}
A delegation game is a tuple $\mathcal{D} = \langle N, G, \{q_i\}_{i\in N}, \Sigma, \beta, u\rangle$, where:
$N$ is a finite set of agents;
$G = \langle N, E \rangle$ is a directed graph;
$q_i$ is $i$'s accuracy;
$\Sigma_i = E(i)\cup\{i\}$ is $i$'s delegation strategy space;
$\beta \in (\frac{n}{2}, n]$ is a quota;
$u$ is the utility function, defined as follows:
\begin{align}
u_i(\d) =  \DB_i(\d) \cdot q_{d^*_i} \label{eq:u}
\end{align}
\end{definition}
Observe that the strategy profiles of this game are delegation profiles. Each such profile $\d$ induces an LDE $\langle N, \omega, \d, \beta \rangle$ where we assume $\omega$ to be the standard weight function assigning weight $1$ to each agent. The utility of profile $\d$ for $i$ is the accuracy that $i$ acquires in $\d$, multiplied by $i$'s power in $\d$, measured by $\DB_i(\d)$.\footnote{Notice that we slightly abuse notation here by using $\d$ directly as input for the index, instead of the corresponding LDE.} Notice that, therefore, the utility of a dummy agent is $0$ and that the utility of a dictator equals her accuracy. 

For our experiments we will be using the more general form of \eqref{eq:u} given by $\DB_i(\d)^\alpha \cdot q_{d^*_i}$, with $\alpha \in [0,1]$. Intuitively, parameter $\alpha$ will be used to control how much agents are influenced by power in the range going from no influence to influence equal to that of accuracy.




\paragraph{Equilibrium Analysis}
In this section, we ask the natural question of whether the games of Definition \ref{def:games} have a Nash equilibrium (NE) in pure strategies.
In general, the answer to this question is negative:
\begin{theorem} \label{th:noNE}
There are delegation games that have no (pure strategy) NE.
\end{theorem}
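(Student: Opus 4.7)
Since the statement is existential, my plan is to exhibit a concrete delegation game and verify that none of its (finitely many) strategy profiles is a NE. I would choose a small instance: $n=3$ agents with quota $\beta=2$, restricted on a sparse graph $G$ (e.g.\ a directed triangle $1\to 2\to 3\to 1$, or a bidirectional edge plus an isolated voter) so that not every agent can delegate to every other. The restriction is essential, since the companion result Theorem~\ref{thm:NEincomplete} shows that NE always exist on the complete graph; hence the counterexample must genuinely exploit a bottleneck in $E$.

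The concrete steps I would follow are: \emph{(i)} enumerate the (at most) $2^3=8$ pure profiles; \emph{(ii)} for each profile $\d$, compute $\hat{\D}(C)$ and $\nu'_\d(C)$ for the $2^n$ coalitions, and from these read off $\DB_i(\d)$ and hence $u_i(\d)=\DB_i(\d)\cdot q_{d^*_i}$; \emph{(iii)} rule out any profile containing a delegation cycle, since by Definition~\ref{df:dummy} every agent on it is dummy and thus gets utility $0$, so any such agent strictly prefers voting herself (which yields at least $\tfrac{1}{2^{n-1}}\cdot q_i>0$); \emph{(iv)} for each of the remaining profiles, exhibit an agent with a strictly profitable unilateral deviation. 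The choice of accuracies $(q_1,q_2,q_3)\in(\tfrac{1}{2},1]^3$ is the tuning knob: by Fact~\ref{fa:todelegate}, delegation is only attractive when the accuracy boost from the new guru strictly compensates the power loss, while Fact~\ref{fa:powermonotonicity} ensures that becoming someone's immediate trustee is power-dominated by staying a guru. These two forces create the possibility of a closed best-response cycle.

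The concrete strategy for closing the cycle is the following. In the "all vote'' profile, every agent has $\DB_i=1/2$; if some agent $i$ has low $q_i$ relative to the $q_j$ of an accessible neighbor $j$ such that $q_j>2q_i$, then $i$ strictly prefers to delegate (since then $d^*_i=j$ and her utility becomes $\tfrac14 q_j>\tfrac12 q_i$). Once she delegates, her trustee $j$ gains power but, crucially, if $j$ can further delegate to some $k$ whose accuracy is even higher, $j$ herself may prefer to pass on the bloc (using Fact~\ref{fact:accrual} in reverse: being the terminal guru with weight $2$ is compared against delegating to an even stronger downstream guru). Chained around the triangle, this can propagate to form a cycle of deviations, returning eventually to the original profile or to a cyclic delegation that everyone wants to break.

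The main obstacle will be finding accuracies that simultaneously (a)~make the "all vote'' profile unstable, (b)~make every one-edge and two-edge forest profile unstable via a different deviating agent, and (c)~keep the three-edge cyclic profile trivially unstable because its utilities are all zero. This is essentially a finite feasibility problem in the three variables $q_1,q_2,q_3$, and I would attack it by writing down the eight utility vectors symbolically and searching for a triple of accuracies in $(\tfrac12,1]^3$ satisfying all the strict inequalities needed for non-equilibrium of every profile simultaneously; if such a triple does not exist on the triangle, I would replace $G$ with a less symmetric three-node or four-node graph and repeat. Once the witness $(G,q)$ is in hand, the proof reduces to a straightforward tabulation.
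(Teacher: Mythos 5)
Your high-level strategy is the same as the paper's: the paper also proves Theorem~\ref{th:noNE} by exhibiting one concrete game and exhaustively checking that every pure profile admits a profitable unilateral deviation (its witness has $N=\{1,\dots,6\}$, $\beta=4$, edges $(1,3),(2,3),(4,6),(5,6)$ and accuracies $(0.51,0.7,0.9,0.6,0.7,0.9)$, so only agents $1,2,4,5$ have nontrivial choices and $16$ profiles need checking). However, your proposal stops at a search plan and never produces a witness, and the concrete instances you commit to cannot work. With $n=3$ and $\beta=2$, the trivial profile is \emph{always} a NE, on any graph: in the trivial profile $\DB_i=1/2$, while after a unilateral delegation $i\to j$ one computes $\DB_i=1/4$ (agent $i$ is swing only for $\{i,j\}$), so a deviation would require $q_j/4>q_i/2$, i.e.\ $q_j>2q_i$ --- exactly the trigger you invoke --- which is impossible since $q_i>\tfrac12$ and $q_j\le 1$. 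This is not a quirk of $n=3$: leaving the trivial profile scales power by $\binom{n-2}{\beta-2}/\binom{n-1}{\beta-1}=(\beta-1)/(n-1)$, so some agent can profit only if $(\beta-1)/(n-1)>q_i/q_j$ is satisfiable, forcing $\beta>(n+1)/2$; the paper's choice $n=6$, $\beta=4$ meets this, your $n=3$, $\beta=2$ does not. The only other admissible quota for $n=3$, namely $\beta=3$, kills the mechanism from the other side: under unanimity every non-dummy agent has $\DB=1/4$ regardless of delegations, so the game degenerates to accuracy-chasing with cycle-avoidance, and on your directed triangle the profile in which the two weaker agents chain into the most accurate one is easily checked to be a NE.

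So the genuine gap is twofold: (i) the destabilizing inequality you rely on for the all-vote profile contradicts the model's assumption $q_i\in(\tfrac12,1]$, so your ``closed best-response cycle'' cannot even start in the proposed instance; and (ii) since the statement is existential, a plan to ``search for accuracies satisfying all strict inequalities'' is not a proof until a witness is exhibited and verified --- and for the instances you specify the search provably comes up empty. To repair it you would need to move to larger $n$ with $(n+1)/2<\beta<n$ (so that delegating costs strictly positive but sufficiently small power), and engineer a structure like the paper's two disjoint stars, where initial delegations toward high-accuracy gurus are profitable from the trivial profile, but crowding of delegators later makes reverting to guru status profitable, re-triggering the first deviations; the exhaustive tabulation over all feasible profiles is then the straightforward part, as in the paper's appendix.
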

\begin{proof}[Sketch of proof]
The proof consists in providing a delegation game and then showing that it is possible to construct a profitable deviation for some agent, for every possible delegation profile. 
The game is $N=\{1,2,3,4,5,6\}$, $q_1=0.51, q_2=0.7, q_3=0.9, q_4=0.6, q_5=0.7, q_6=0.9$, $\beta=4$, and the underlying directed graph is $G=\langle N, E=\{(1,3),(2,3),(4,6),(5,6)\}\rangle$.
\end{proof}


\remove{ 
Since we assume that the quota voting in used in delegation games (or the LDEs), it is observed that the setting of quota $\beta$ can also influence agents' behaviours.
When quota is small, specifically, smaller than $|N|/2$, NE can always be guarantee.
\begin{theorem}
\label{thm:nemin}
In any delegation game $\mathcal{D}=\langle N,G,\{q_i\}_{i\in N}, \Sigma, \beta,u\rangle$, where $\beta\le \lceil \frac{n}{2}\rceil$ ($n=|N|$), there always exists at least one (pure strategy) NE.
\end{theorem}

\begin{proof}
We show that in any delegation game, which satisfies the above condition, the trivial profile $\d$ ($\forall j\in N,\d_j=j$) is a NE.
Assume towards a contradiction that an agent $i\in N$ exists, such that $i$ has incentive to deviate from $\d$.
That is, another agent $i'\in N$ exists, such that $i$ obtains higher utility if she changes to delegates to $i'$, formally, $u_i(\d') > u_i(\d)$, where $\d'=(\d_{-i},\d'(i)=i')$.
In profile $\d$, we obtain that $\DB_i(\d)= {n-1 \choose \beta-1} /2^{n-1}$, therefore, $u_i(\d)=q_i*\DB_i(\d)=q_i*{n-1 \choose \beta-1} /2^{n-1}$.
On the other hand, in profile $\d'$, since $\d'(i)=i'$, $i$ can be a swing agent in any coalition if $i'$ is also contained in the coalition.
Thus $\DB_i(\d')={n-2\choose \beta-2}/2^{n-1}$, and $u_i(\d')=q_{i'}*\DB_i(\d')=q_{i'}*{n-2\choose \beta-2}/2^{n-1}$.
By the assumption,
\begin{align}
\label{eq:trivialne1}
q_{i'}*{n-2\choose \beta-2}/2^{n-1}>q_i*{n-1 \choose \beta-1} /2^{n-1}.
\end{align}
We substitute the combination operators in Eq~\ref{eq:trivialne1} and obtain
$$\frac{\beta-1}{n-1}>\frac{q_i}{q_{i'}}.$$
Since for any $j\in N$, $q_j\in (0.5,1]$, we have $$\frac{\beta-1}{n-1}>\frac{1}{2},$$
which contradicts $\beta\le \lceil \frac{n}{2}\rceil$.
\end{proof}
}

However NE can be guaranteed to exist when the underlying network is complete.
\begin{theorem}
\label{thm:NEincomplete}
In any delegation game $\mathcal{D}=\langle N, \{q_i\}_{i\in N}, G, \Sigma, \beta, u\rangle$, where $G$ is a complete network, there exists at least one (pure strategy) NE. 
\end{theorem}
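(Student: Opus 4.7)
The plan is to exhibit an explicit pure-strategy NE: a profile in which every agent delegates to a most accurate agent. Fix some $j^{*}\in\arg\max_{i\in N}q_{i}$ and, using that $G$ is complete (so $j^{*}\in E(i)$ for every $i\neq j^{*}$), define the profile $\hat{\d}$ by $\hat{d}_{j^{*}}=j^{*}$ and $\hat{d}_{i}=j^{*}$ for every $i\neq j^{*}$. In $\hat{\d}$ the only guru is $j^{*}$ and every other agent delegates to her directly, so for any $C\subseteq N$ we have $\hat{\D}(C)=C$ whenever $j^{*}\in C$ and $\hat{\D}(C)=\emptyset$ otherwise. The characteristic function of the delegative simple game therefore satisfies $\nu'_{\hat{\d}}(C)=1$ iff $j^{*}\in C$ and $|C|\geq\beta$.

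Next I would compute $\DB_{i}(\hat{\d})$ and show that no agent gains by deviating. For $i\neq j^{*}$ the swing coalitions are exactly those $C\subseteq N\setminus\{i\}$ containing $j^{*}$ with $|C|=\beta-1$, giving $\DB_{i}(\hat{\d})=\binom{n-2}{\beta-2}/2^{n-1}$ and $u_{i}(\hat{\d})=q_{j^{*}}\cdot\binom{n-2}{\beta-2}/2^{n-1}$; for $j^{*}$ the swing coalitions are all $C\subseteq N\setminus\{j^{*}\}$ with $|C|\geq\beta-1$, so her utility is strictly positive. I then check the two kinds of deviation available to $i\neq j^{*}$ in a complete graph. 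If $i$ switches to $d'_{i}=i$, she becomes a guru with weight $1$; since $\beta>1$ her swing coalitions are still precisely those with $j^{*}\in C$ and $|C|=\beta-1$, so $\DB_{i}$ is unchanged while her guru accuracy drops from $q_{j^{*}}$ to $q_{i}\leq q_{j^{*}}$. If instead $i$ switches to $d'_{i}=k$ for some $k\in N\setminus\{i,j^{*}\}$, the chain $i\to k\to j^{*}$ means $i$ belongs to $\hat{\D}(C\cup\{i\})$ only when $k\in C$, so her swing count shrinks to $\binom{n-3}{\beta-3}\leq\binom{n-2}{\beta-2}$, while her guru remains $j^{*}$ so the accuracy factor is unchanged. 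Either way $i$'s utility weakly decreases.

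The remaining case is a deviation by $j^{*}$. For any $k\neq j^{*}$ the move $d'_{j^{*}}=k$ closes the delegation cycle $j^{*}\to k\to j^{*}$, because $k$ was delegating to $j^{*}$ in $\hat{\d}$; no agent is then linked to a guru, so by Definition~\ref{df:dummy} every agent becomes dummy and $u_{j^{*}}$ collapses to $0$, strictly below $j^{*}$'s positive payoff in $\hat{\d}$. Hence $\hat{\d}$ is a pure-strategy NE. The main delicate point in the plan is the bookkeeping of $\hat{\D}(\cdot)$ under each deviation, and in particular the combinatorial inequality $\binom{n-3}{\beta-3}\leq\binom{n-2}{\beta-2}$ which captures the intuition that routing a vote through an additional intermediary can only reduce swing power; completeness of $G$ enters exclusively to guarantee that $\hat{\d}$ is a feasible strategy profile for every agent.
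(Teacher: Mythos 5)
Your proposal is correct, but it proves the theorem by a genuinely different and more direct route than the paper. The paper first observes (via Fact~\ref{fa:forward}) that any equilibrium on a complete network has chains of length $1$, and then builds an equilibrium by a sequential, irreversible process (Algorithm~\ref{algo:NEincomplete}) in which agents, in turn, choose between remaining gurus and delegating to the most accurate agent $i^*$; the resulting profile generally mixes gurus and delegators, and verifying it is a NE requires the monotonicity lemmas (Lemma~\ref{prop:ne1}--\ref{prop:ne3}) about how $\DB$ evolves as the set of delegators to $i^*$ grows, plus the binomial inequality of Lemma~\ref{lemma6}, which uses $\beta>n/2$ in an essential way. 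You instead verify directly that the full star on $j^*$ is itself an equilibrium, and your key observation is sound: with unit weights and $\beta>1$, a coalition not containing $j^*$ can accrue weight at most $1$, so when a delegator unilaterally becomes a guru the characteristic function $\nu'$ of the delegative simple game is literally unchanged, hence her $\DB$ is unchanged and only the accuracy factor (weakly) drops; rerouting through an intermediary shrinks the swing count from $\binom{n-2}{\lceil\beta\rceil-2}$ to $\binom{n-3}{\lceil\beta\rceil-3}$ (this is just Fact~\ref{fa:forward} made explicit); and any move by $j^*$ closes a two-cycle, making her dummy with utility $0$ against a strictly positive payoff (here $\beta\le n$ guarantees $\DB_{j^*}>0$). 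Two small points worth stating explicitly: ties $q_i=q_{j^*}$ are harmless because NE only forbids strictly profitable deviations, and for non-integer $\beta$ the cardinality conditions should be written with $\lceil\beta\rceil$. What each approach buys: yours is shorter, gives closed-form $\DB$ values, and uses only $\beta>1$ rather than $\beta>n/2$; the paper's construction produces an equilibrium in which only agents who gain actually delegate (closer to the dynamics simulated in the experiments) and its lemmas carry structural information about power shifts that the bare star-profile argument does not provide.
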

\begin{proof}[Sketch of proof]
First of all observe that on a complete network, if an equilibrium exists, it must be such that all delegation chains are of length $1$ by Fact \ref{fa:forward}.
The theorem is then proven by construction. We construct a profile by letting each agent choose, in turn according to a given sequence, whether to delegate to agent $i^*\in N$, who has the highest accuracy. If an agent decides to delegate to $i^*$, this delegation is assumed to be fixed. So the set of delegators (to $i^*$) monotonically increases. The process continues until no guru wants do delegate. We show that the profile constructed in such a way is a NE by showing that:
\fbox{1} $i^*$ has no profitable deviation because were she deviating to a delegator, she would form a cycle (thereby obtaining utility $0$), and were she delegating to another guru she would inherit a lower accuracy (by assumption) and have lower power (by Fact~\ref{fa:todelegate}).
\fbox{2} No delegator has a profitable deviation, neither by delegating to another delegator (by Fact~\ref{fa:forward}), nor by delegating to another guru (as she would inherit lower accuracy and obtain lower power), nor by becoming herself a guru. The latter claim requires some work and can be shown by proving that as more agents delegate to $i^*$, the power of delegators does not decrease while that of gurus (except for $i^*$) does.
\fbox{3} Finally, no guru has a profitable deviation, neither by delegating to $i^*$ (by construction), nor to another guru (as she would then be better off delegating to $i^*$), nor by delegating to a delegator (again, by Fact~\ref{fa:forward}).
\end{proof}


\section{Experiments}

We use the above model to study, by means of experiments, how the distribution of power in profiles is affected by specific parameters. In particular, we are interested in gaining insights into: 
whether higher connectivity of the underlying network increases power imbalances; whether the more agents are driven by power the less they tend to delegate. 

\medskip

\noindent
{\bf Setup}
Starting from the trivial profile, we generate profiles in two ways: by one-shot interaction (OSI), in which each agent selects their neighbor (including themselves) which maximises their uitility; and by iterated better response dynamics (IBRD), in which each agent iteratively selects one neighbor at random and delegates to her only if this increases her utility, until a stable state (equilibrium) is reached. 

As one might expect, the bottleneck in our experiments consists in the computation of \DB~in order to establish agents' utilities by Eq.~\eqref{eq:u}. It is well-known that computing the Banzhaf index in weighted voting games is intractable \cite{MATSUI2001305}. We therefore implement the approximation method described in \cite{bachrach2008approximating}. Each time we need to establish the $\DB$ of an agent, 15000 coalitions are randomly sampled (by uniform distribution), and the ratio of the coalitions for which the agent is swing is used as the estimator of the $\DB$. By the analytical bounds proven in \cite{bachrach2008approximating}, with the above method we know that the correct $\DB$ is in the confidence interval $[\widehat{\DB}-0.011,\widehat{\DB}+0.011]$ with probability of $0.95$, where $\widehat{\DB}$ is the estimator. So it should be clear that the statistics presented in this section report on values that depend on the estimator $\widehat{\DB}$, and that with high probability are close to the exact intended values.

We will be working with two parameters. To test the effect of connectivity on power we assume that interaction happens on a random network and vary the probability $p$ (see range in e.g., Fig.~\ref{a:ratio}) of any two agents being linked. 
To test the effect of different attitudes towards the importance of power for agents we work with the generalization $\DB_i(\d)^\alpha \cdot q_{d^*_i}$ of \eqref{eq:u} with $\alpha \in \{0, 0.25, 0.5, 0.75, 1 \}$ and assuming an underlying random network with $p = 0.75$. 

We set $|N|=30$, the quota $\beta = 16$, and for each parameter setting, we use an accuracy vector $Q\in \mathbb{R}^{30}$, where each element in $Q$ is drawn from a Gaussian distribution $\mathcal{N}(0.75,0.125)$.
All statistics are the mean value over 50 instances for each parameter setting.
Further details on the setup of our experiments, including pseudo-code for the algorithms of OSI and IRBD are provided in the appendix.
 \begin{figure}[t]
\centering
\subcaptionbox
{A: ratio of delegators\label{a:ratio}}{\includegraphics[width=3.7cm]{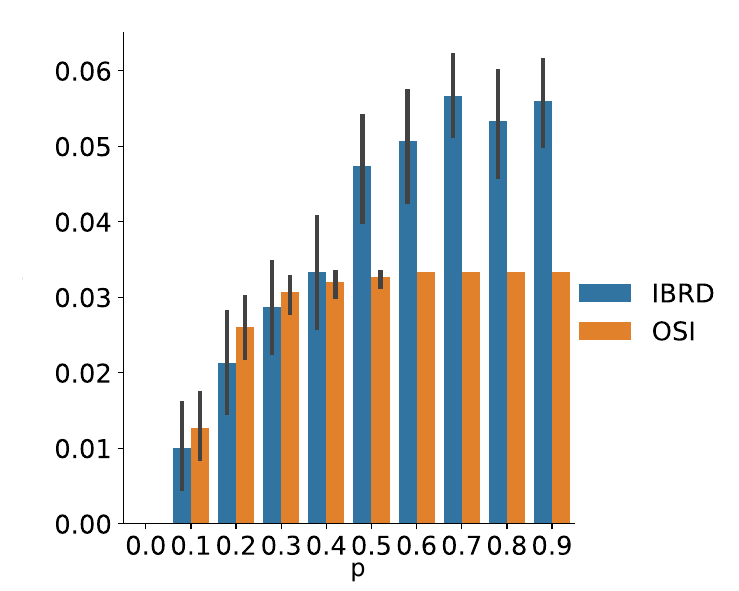}}
\subcaptionbox
{B: ratio of delegators\label{b:ratio}}{\includegraphics[width=3.7cm]{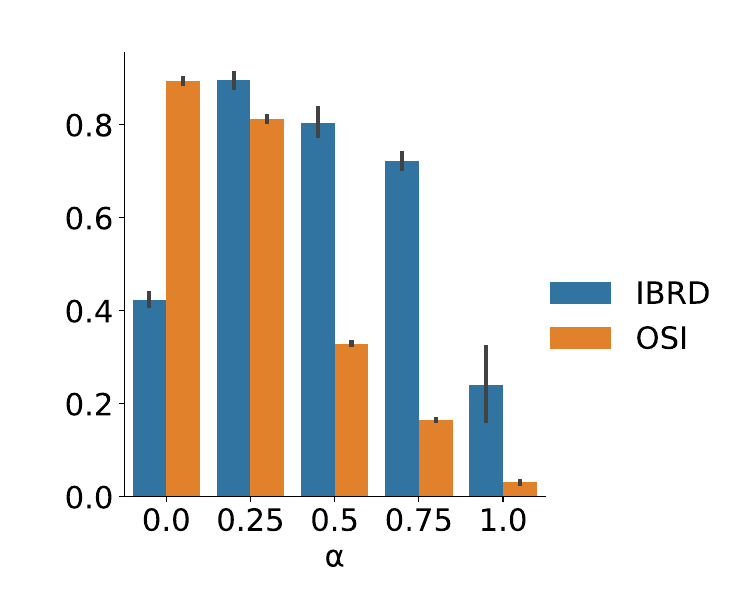}}
\par
\subcaptionbox
{A: average $\DB$\label{a:avgban}}{\includegraphics[width=3.7cm]{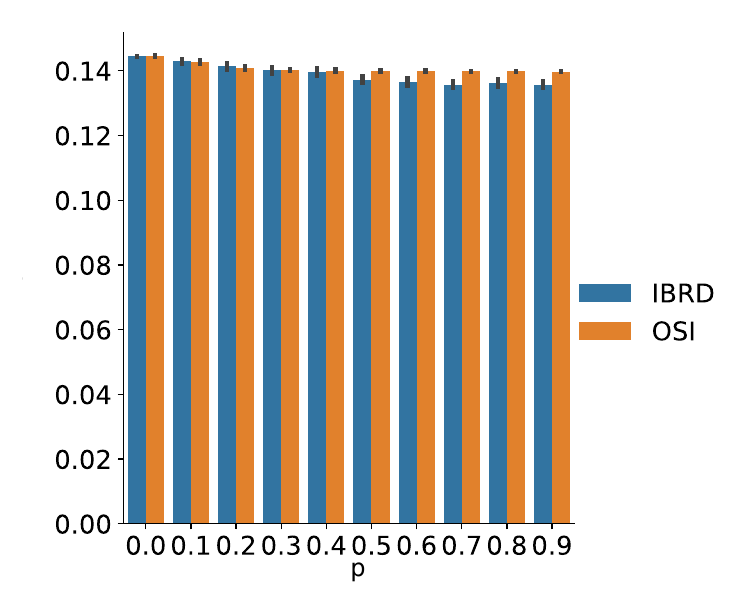}}
\subcaptionbox
{B: average $\DB$\label{b:avgban}}{\includegraphics[width=3.7cm]{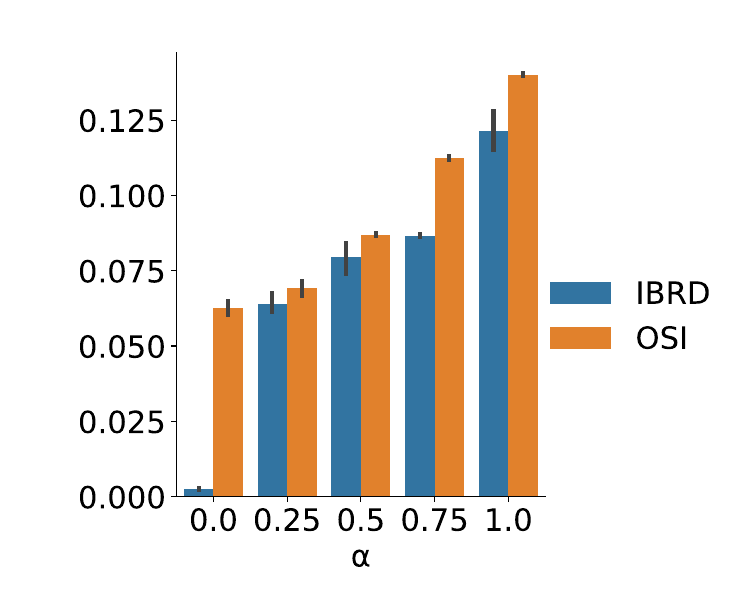}}
\par
\subcaptionbox
{A: Gini coefficient\label{a:gini}}{\includegraphics[width=3.7cm]{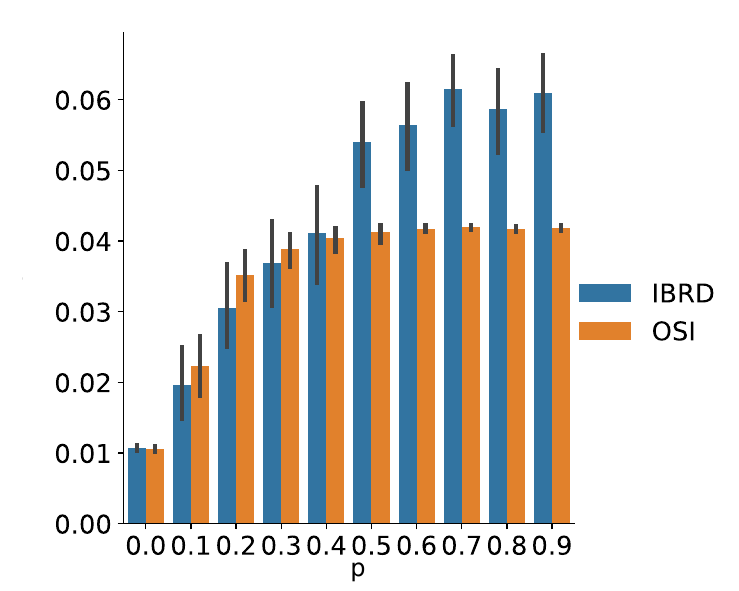}}
\subcaptionbox
{B: Gini coefficient\label{b:gini}}{\includegraphics[width=3.7cm]{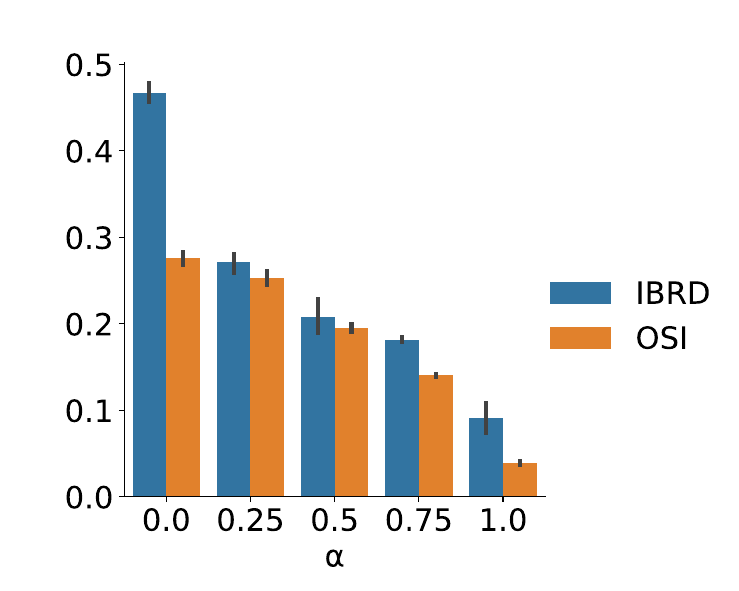}}
 \caption{Selected plots from experiments A and B}
\end{figure}


\medskip

\noindent
{\bf Connectivity: experiment A}
Fig.~\ref{a:ratio} shows that the higher the connectivity (the larger $p$), the more agents tend to delegate both in equilibrium (IBRD) and in one-shot interaction (OSI). This is in line with expectations as agents have more chance to interact with high-accuracy agents. It is worth observing, however, that the ratio of delegators is very low (less than 0.06 on average). That is, very few agents delegate on average. This is in contrast with the behavior of the delegation game where utility is solely based on accuracy (cf. \citet{bloembergen19rational}). We will see with experiment B that the influence of power on agents' utility seems to be an important factor in limiting vs. facilitating delegations.
Despite the small number of delegations we can still observe that increasing $p$ lowers the mean value of \DB (Fig.~\ref{a:avgban}) and increases inequality in the distribution of \DB, measured by the Gini coefficient (Fig.~\ref{a:gini}), although it should be stressed the Gini coefficient remains very low due to the small fraction of delegators. Intuitively, more delegations enhance some agents' power, but reduce the power of other agents, be they gurus or delegators.


\medskip

\noindent
{\bf Power: experiment B}
Fig.~\ref{b:ratio} shows that larger values of $\alpha$ correspond to significantly fewer delegators for OSI. As agents put more weight on power they are more reluctant to delegate in the initial profile (recall Fact \ref{fa:todelegate}). For IRBD this effect is observable only for $\alpha$ in the upper half of the range of available values. We argue this may depend on the fact that IRBD, at the initial profile, allows for delegations to take place that only suboptimally improve utility, triggering then further delegations at later iterations. As $\alpha$ grows, the average power increases (Fig.~\ref{b:avgban}) and inequality in the distribution of power decreases (Fig.~\ref{b:gini}). The decreasing ratio of delegators (Fig. \ref{b:ratio}) is another side of this trend: as $\alpha$ decreases the group turns from consisting mostly of delegators ($\alpha \in \{ 0,  0.25 \}$) to low numbers of delegators comparable to those observed in experiment A ($\alpha = 1$). 

Interestingly, the average length of chains also significantly decreases from around $5.5$ for $\alpha =0$ to $1.2$ for $\alpha =0.25$, and further mildly decreases to around $0.5$ for $\alpha =1$. This is in line with Fact~\ref{fa:forward}: as power becomes more important, agents prefer shorter chains to their gurus.
%

\smallskip

We also ran experiments to identify the effects of varying quota $\beta$ in the range $\{0.6|N|, 0.8|N|, |N|\}$. We identified a similar trend showing that higher quota tend to limit the amount of delegations, but our results were less robust than those reported in experiments A and B.
The refinement of this experiment is left to future work.


\section{Conclusions}

The paper developed a power index for voting with delegable proxy. We showed that the index generalizes the Banzhaf index for standard weighted voting and can be axiomatized in a similar fashion. 
We used the index to model a variant of delegation games for liquid democracy where agents seek to find a tradeoff between increasing their accuracy and acquiring power in the system. We showed equilibria for this sort of interaction exist under a full connectivity condition, but do not exist in general. Finally, two parameters of the model were shown, through simulations, to play an important role in containing the emergence of large inequalities in the distribution of power: the level of connectivity of an underlying (random) network, and the extent to which agents are motivated by the aquisition of power. 

The paper opens several directions for future research of both theoretical and experimental kind. Here we mention one: it would be interesting to 
understand how much agents' attitude towards power could help in readdressing the deterioration of decision-making quality highlighted by \citet{kahng18liquid,caragiannis19contribution}, through its equalizing effect on power distribution.


\paragraph{Acknowledgments}
The authors are grateful to Shuaipeng Liu for helpful insights on the proof of Theorem~\ref{thm:NEincomplete}.


\bibliography{power_bib}{}

\section{Appendix}

\appendix

This technical appendix is structured in two sections:
\begin{itemize}
\item The first section provides full proofs of all results presented in the paper;
\item The second section provides further details on our experimental setup and our simulations.
\end{itemize}


\section{Full Proofs}

\subsection{Characterization of \DB}

We start by fixing some auxiliary notation. Let an LDE $\V=\langle N,\omega, \d, \beta\rangle$ be given.
For any $i\in N$, let $(\d_{-i}, d'_i)$ be the profile in which any $j\in N\ex \{i\}$ delegates as in $\d$, while agent $i$ chooses delegation $d'_i$.
Furthermore, $\d_C$ denotes the profile restricted to a coalition $C\subseteq N$. Such a restricted profile is a mapping $\d_C: C\ra C\cup \{\nul\}$  defined as follows, for all $i\in C$: 
\[
\d_C(i) = 
\left\{
\begin{array}{ll}
\d(i) & \mbox{if} \ \d(i) \in C \\
\nul & \mbox{otherwise}
\end{array}
\right.
\]
That is, in $\d_C$ all agents in $C$ either delegate to agents in $C$ or abstain.
Recall then notation $\hat{\D}(C)$, i.e., the set of all agents that delegate to some guru in $C\subseteq N$ via a delegation chain contained in $C$, in $\d$.
The same set, for a different profile $\d'$, is denoted $\hat{\D}'(C)$.

\smallskip

\begin{proof}[Proof of Lemma \ref{lemma:characterization1}]
To prove the \textbf{MP}, assume that agent $i$ is a dictator in a LDE $V=\langle N,\omega, \d, \beta \rangle$, then for any coalition $C\subseteq N\backslash \{i\}$, $\nu'_V(C)=0$ and $\nu_V(C\cup\{i\})=1$.
Since the number of all possible coalitions $C\subseteq N\ex \{i\}$ is $2^{n-1}$, $\DB_i(V)=1$.

Then \textbf{NP} simply followed by for any $i\in N^{dum}$, $\DB_i(\V)=\sum_{C\subseteq N\setminus \{i\}}(\nu'_\V(C\cup\{i\})-\nu'_\V(C))=0$ by Definition~\ref{df:dummy}.

To show that $\DB$ satisfies the \textbf{SP}, one first has to show that by the way in which weights $\beta_1 \land \beta_2$ and $\beta_1 \vee \beta_2$ are set in Definition \ref{def:completeness}, we have that for any coalition $C\subseteq N_1\cup N_2$, $\nu'_{\V_1\wedge \V_2}(C)=1$ iff $\nu'_{\V_1}(C\cap N_1)=1$ {\em and} $\nu'_{\V_2}(C\cap N_2)=1$, and $\nu'_{\V_1\vee\V_2}(C)=1$ iff $\nu'_{\V_1}(C\cap N_1)=1$ {\em or} $\nu'_{\V_2}(C\cap N_2)=1$. The proof can then proceed with a standard argument.

We first consider any $i\in N_1-N_2$, i.e., agent $i$ is contained in $N_1$ but not in $N_2$.
Let $m_i^V$ denote the number of times that agent $i$ is swing in the delegative simple game for $\V$, i.e., $m_i^V=|\{C\subseteq N\setminus \{i\}\mid \nu'_V(C)=0, \nu'_V(C\cup \{i\})=1\}|$.
Then, if $i$ is swing in $C\subseteq N_1$ in LDE $V_1$, she is also swing in $(C\cup C')\cap N_1$, for any $C'\subseteq N_2-N_1$.
Therefore, in LDE $V_1\vee V_2$, 
$
m_i^{V_1\vee V_2}=m_i^{V_1}2^{|N_2-N_1|},
$ 
where $m_i^{V_1\vee V_2}$ is the number of times that $i$ is swing in LDE $V_1\vee V_2$.
Additionally, since $i\in N_1-N_2$, $m_i^{V_2}=0$, that is, $i$ cannot be swing in LDE $V_2$, which implies $m_i^{\V_1\wedge\V_2}=0$.
Hence we have for $i\in N_1-N_2$, 
$
m_i^{V_1\vee V_2}=m_i^{V_1}2^{|N_2-N_1|}+m_i^{V_2}2^{|N_1-N_2|}-m_i^{V_1\wedge V_2}.
$
Identical equations can be developed for agent $i\in N_2-N_1$ or $i\in N_1\cap N_2$.
We then divide each side of the equation by $2^{|N_1\cup N_2|-1}$ and obtain that, for any $i\in N_1\cup N_2$,
$
\frac{m_i^{V_1\vee V_2}}{2^{|N_1\cup N_2|-1}}=\frac{m_i^{V_1}}{2^{|N_1|-1}}+\frac{m_i^{V_2}}{2^{|N_2|-1}}-\frac{m_i^{V_1\wedge V_2}}{2^{|N_1\cup N_2|-1}},
$
which implies that $\DB_i(V_1\wedge V_2)+\DB_i(V_1\vee V_2)=\DB_i(V_1)+\DB_i(V_2)$.

To prove the \textbf{BP}, we write $\DB_i(V)+\DB_j(V)$ as:
\begingroup\makeatletter\def\f@size{8}\check@mathfonts
\begin{align*}
\begin{split}
&2^{n-1}(\DB_i(V)+\DB_j(V))\\=&\sum_{C\subseteq N\ex \{i\}}(\nu'_V(C\cup\{i\})-\nu'_V(C))+\sum_{C\subseteq N\ex\{j\}}(\nu'_V(C\cup\{j\})-\nu'_V(C))\\
=&\sum_{C\subseteq N\ex\{i,j\}}(\nu'_V(C\cup\{i\})-\nu'_V(C)+\nu'_V(C\cup\{i,j\})-\nu'_V(C\cup\{j\}))\\&+\sum_{C\subseteq N\ex\{i,j\}}(\nu'_V(C\cup\{j\})-\nu'_V(C)+\nu'_V(C\cup\{i,j\})-\nu'_V(C\cup\{i\}))\\
=&2\sum_{C\subseteq N\ex\{i,j\}}(\nu'_V(C\cup\{i,j\})-\nu'_V(C)).
\end{split}
\end{align*}
\endgroup
Since $\DB_{ij}(V')=1/2^{n-2}\sum_{C\subseteq N\setminus \{i,j\}}(\nu'_V(C\cup\{i,j\})-\nu'_V(C))$, $\DB_i(V)+\DB_j(V)=\DB_{ij}(V')$, where $V'$ is the bloc LDE by forming $i$ and $j$ into a bloc.

As for \textbf{ET}, assume that $i$ and $j$ are symmetric agents.
We show that whenever $i$ is a swing agent, so is $j$, and vice versa.
Then $i$ serves as a swing agent in two cases:
\begin{itemize}
\item[(1)] For any $C\subseteq N\ex \{i,j\}$, such that $\nu'_V(C\cup\{i\})-\nu'_V(C)=1$, by Definition~\ref{df:symmetric}, we obtain
$$\nu'_V(C\cup\{j\})-\nu'_V(C)=\nu'_V(C\cup\{i\})-\nu'_V(C)=1.$$
\item[(2)] For any $C\subseteq N\ex \{i,j\}$, such that $\nu'_V(C\cup\{i, j\})-\nu'_V(C\cup\{j\})=1$, by Definition~\ref{df:symmetric}, we obtain that
$$\nu'_V(C\cup\{i, j\})-\nu'_V(C\cup\{i\})=\nu'_V(C\cup\{i, j\})-\nu'_V(C\cup\{j\})=1.$$
\end{itemize}
That is, each time $i$ serves as a swing agent, $j$ also serves as a swing agent once.
By a similar argument, it can be obtained that each time $j$ serves as a swing agent, $i$ also serves as a swing agent once.
Then $\DB_i(V)=\DB_j(V)$.
\end{proof}


\begin{proof}[Proof of Lemma \ref{lemma:characterization2}]
We start by introducing the following claim based on Lemma~\ref{lemma:characterization2}.
\begin{claim}
\label{claim:weakcharacterization}
A power index $f$ for unanimity LDEs satisfies \textbf{MP}, \textbf{NP}, \textbf{SP}, \textbf{ET}, and \textbf{BP}, only if it is $\DB$.
\end{claim}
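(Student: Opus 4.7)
The plan is to first dispose of dummy agents by a direct appeal to axiom \textbf{NP}, and then to establish $f_i(\V_j)=1/2^{n_j-1}$ for every non-dummy agent $i$ in the ULDE $\V_j = \langle C_j, \omega, \d_{C_j}, \beta^U \rangle$ by induction on $n_j = |C_j \setminus C_j^{\dum}|$. For any dummy agent, \textbf{NP} immediately yields $f_i(\V_j) = 0 = \DB_i(\V_j)$, matching Definition~\ref{df:dummy}. The base case $n_j = 1$ is also immediate: the unique non-dummy agent must meet the unanimity quota on her own (the other agents contribute no weight to $\nu'$), so she is a dictator and \textbf{MP} gives $f_i(\V_j) = 1 = 1/2^0$.

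For the inductive step, assume the statement holds for any ULDE with $k$ non-dummy agents, and let $\V_j$ have $k+1$ such agents. The key structural observation is that in a ULDE every pair of non-dummy agents $i, i' \in C_j$ is symmetric in the sense of Definition~\ref{df:symmetric}: because $\beta^U$ equals the total accrued weight of $\D(C_j)$, removing any non-dummy agent from a coalition drops it below the quota, so for any $C \subseteq C_j \setminus \{i, i'\}$ we have $\nu'_{\V_j}(C \cup \{i\}) = \nu'_{\V_j}(C \cup \{i'\})$ (both sides are $0$ unless $C$ already saturates the quota through the remaining non-dummy agents). Hence \textbf{ET} forces $f_i(\V_j) = f_{i'}(\V_j)$ for every pair of non-dummy agents in $C_j$, reducing the task to computing a single common value.

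To compute that value, select two non-dummy agents $i, i' \in C_j$ that are eligible for bloc formation in the sense of Definition~\ref{def:bloc}, i.e.\ either two gurus or a delegator with her direct trustee; such a pair must exist because $C_j$ contains at least one guru and, if only one, the $k+1 \ge 2$ non-dummy agents must be connected to it via delegation edges within $C_j$. Applying \textbf{BP} yields a new LDE $\V'_j$ in which $i, i'$ are merged into a single bloc $ii'$, and one verifies that $\V'_j$ is still a ULDE with exactly $k$ non-dummy agents: the merged weight preserves the unanimity condition $\beta^U = \sum_{a \in \D(C_j')} \omega'(a)$, and the bloc is non-dummy because the two merged agents were. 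By the inductive hypothesis $f_{ii'}(\V'_j) = 1/2^{k-1}$, and combining with the symmetry relation $f_i(\V_j) = f_{i'}(\V_j)$ through \textbf{BP} gives $f_i(\V_j) = f_{ii'}(\V'_j)/2 = 1/2^{k} = 1/2^{n_j - 1}$, completing the induction.

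The main technical obstacle will be verifying the two bookkeeping points: that a bloc-eligible pair of non-dummy agents always exists in $C_j$, and that the LDE obtained after bloc formation is indeed a ULDE with one fewer non-dummy agent. The first point requires a short case analysis on $|C_j^\d|$: if there are at least two gurus, pick any two of them; otherwise, there is a unique guru and the connectivity of $C_j$'s delegation graph (inherited from being a minimally winning coalition of a ULDE) guarantees some non-dummy delegator $i \in C_j$ whose trustee $\d(i)$ also lies in $C_j$. The second point requires tracking how Definition~\ref{def:bloc} interacts with the unanimity quota, which is routine once one observes that both weights and the quota are preserved by the merge, and that no third agent's dummy status changes because the delegation structure outside the merged pair is intact.
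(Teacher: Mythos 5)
Your proposal is correct and follows essentially the same route as the paper's proof of this claim: \textbf{NP} disposes of dummy agents, \textbf{MP} handles the single-non-dummy (dictator) base case, \textbf{BP} merges a bloc-eligible pair (two gurus, or a delegator with her direct trustee) so the inductive hypothesis applies to the resulting smaller ULDE, and \textbf{ET} splits the bloc's value equally and propagates it to all non-dummy agents, which are pairwise symmetric in a ULDE. The only (harmless) difference is that you induct on the number of non-dummy agents, whereas the paper inducts on the total number of agents and treats the dummy/guru configurations as cases inside the inductive step; both parameters drop by one under the same bloc-formation move, so the arguments coincide in substance.
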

Then the proof is approached by first showing that, Lemma~\ref{lemma:characterization2} holds if Claim~\ref{claim:weakcharacterization} holds.
Next, we provide the proof that supports Claim~\ref{claim:weakcharacterization}.\\

First, we show that power index $f$ is $\DB$ for any LDE if $f$ is $\DB$ for any unanimity LDE.
Assume that an arbitrary LDE $\V$ is given and let $\mathscr{C}=\{C_1, \dots, C_m\}$ denote all minimally winning coalitions.
Notice that any winning coalition can be represented as the union of a subset of $\mathscr{C}$.
Hence $\V$ can be represented as the disjunction of $m$ unanimity LDEs, i.e., $\V=\V_1\vee \V_2\vee \dots \vee \V_m$ where $\V_j=\langle C_j,\omega, \d_{C_j}, \beta^U\rangle$ ($1\le j\le m$) is a unanimity LDE.
Observe that any agent's delegation strategy is consistent in all unanimity games, that is, the condition of Definition~\ref{def:completeness} is satisfied.

We prove by induction on the size of disjunction of unanimity LDEs.
As the basis, $f_i(\V_j)$ is $\DB$ by the assumption, where $i\in N$ and $1\le j\le m$.
Henceforth, we assume that for any LDE, which is the disjunction of $k$ ($k<m$) unanimity games in $\{\V_1,\dots, \V_m\}$, $f$ is equivalent to $\DB$, then prove that for any LDE, which is the disjunction of $k+1$ unanimity games in $\{\V_1,\dots, \V_m\}$, $f$ is also equivalent to $\DB$.
Without loss of generality, assume that $f_i(\V_1\vee \dots \vee \V_k)$ is $\DB$, and we prove that $f_i(\V_1\vee\dots \vee \V_k\vee \V_{k+1})$ is also $\DB$.
By \textbf{SP}, we have $f_i(\V_1\vee\dots \vee \V_k\vee \V_{k+1})=f_i(\V_1\vee\dots\vee \V_k)+f_i(\V_{k+1})-f_i((\V_1\vee\dots \vee \V_k)\wedge \V_{k+1})$.
Observe that $(\V_1\vee\dots \vee \V_k)\wedge \V_{k+1}=(\V_1\wedge \V_{k+1})\vee\dots\vee (\V_k\wedge \V_{k+1})$.
Since $\V_j$ is a unanimity LDE, $\V_j\wedge \V_{k+1}$ is equivalent to the unanimity LDE $\langle C_j\cup C_{k+1},\omega, \d_{C_j\wedge C_{k+1}}, \beta^U\rangle$.
Therefore, by the assumption that $f$ is $\DB$ for disjunction of $k$ unanimity LDEs, we have $f$ is $\DB$ for $(\V_1\vee\dots \vee \V_k)\wedge \V_{k+1}$.
Hence it implies that $f$ is $\DB$ for $\V_1\vee\dots \vee \V_k\vee \V_{k+1}$.
Intuitively, the number of times that any agent $i\in \bigcup_{1\le j\le k+1}C_j$ serves as a swing agent in $\V_1\vee\dots \vee \V_k\vee \V_{k+1}$ is the sum of her swing times in $\V_1\vee\dots \vee \V_k$ and $\V_{k+1}$, subtracting her swing times in $(\V_1\vee\dots \vee \V_k)\wedge \V_{k+1}$.
Therefore, we proved that if Claim~\ref{claim:weakcharacterization} holds, Lemma~\ref{lemma:characterization2} holds automatically.

\medskip
Next, we prove Claim~\ref{claim:weakcharacterization} by induction on the size of the agent set.
Consider an arbitrary unanimity LDE $\dot{\V}=\langle N,\omega, \d, \beta^U\rangle$.
Let $\hat{N}=N\setminus N^{dum}$ denote all non-dummy agents, and $\hat{n}=|\hat{N}|$.
As the basis of the induction, consider the case in which there is only one agent, i.e., $N=\{i\}$.
If $i$ is a dummy agent, by \textbf{NP}, $f_i(\dot{\V})=0$.
On the other hand, if $i\notin N^{dum}$, $i$ is a dictator, which implies that $f_i(\dot{\V})=1=1/2^{(\hat{n})-1}$ due to \textbf{MP}.

Then we assume that $f$ is \DB\ if $|N|=k$ ($k\in \mathbb{N}_+$, i.e., positive integer), and prove that $f$ is also \DB\ if $|N|=k+1$.
That is in $\dot{\V}$ ($|N|=k+1$), we prove that for any $i\in \hat{N}$, $f_i(\dot{\V})=1/2^{\hat{n}-1}$ and for any $i\in N^{dum}$, $f_i(\dot{\V})=0$, which is identical to \DB.
For any unanimity LDE, let's consider three exhaustive cases: (1) all agents are dummy agents, (2) only one non-dummy agent exists in the unanimity LDE, and (3) more than one non-dummy agents exist.
\begin{case}$N=N^{dum}$.
That is, all agents are dummy agents.
Then by \textbf{NP}, for all $i\in N$, $f_i(\dot{\V})=0$.
\end{case}

\begin{case}$|N^{dum}|=k$.
In this case, there is only one non-dummy agent, denoted by $i$.
Then $i$ is a dictator, and $f_i(\dot{\V})=1=1/2^{\hat{n}-1}$ by \textbf{MP}.
On the other hand, for any $j\in N^{dum}$, $f_i(\dot{\V})=0$ by \textbf{NP}.
\end{case}

\begin{case} $|N|-|N^{dum}|>1$.
When there are more than one non-dummy agents in the unanimity LDE, we further consider three subcases:
\begin{subcase}\label{case3.1}$\forall i\in N\setminus N^{dum}$, $i\in N^{\d}$. 
That is, any non-dummy agent is a guru.
Let $i,j\in N\setminus N^{dum}$, then we form $i$ and $j$ into a bloc and obtain the bloc LDE $\V'$.
Observe that $\V'$ has $k$ agents.
Therefore, by assumption and \textbf{BP}, $f_i(\dot{\V})+f_j(\dot{\V})=f_{ij}(\V')=1/2^{\hat{n}-2}$.
Since $i$ and $j$ are symmetric in $\dot{V}$, $f_i(\V')=f_j(\V')=1/2^{\hat{n}-1}$ due to \textbf{ET}.
Moreover, any agent $i'\in \hat{N}\setminus \{i,j\}$ is symmetric with $i$ (or $j$), thus $f_{i'}(\V')=f_i(\V')=1/2^{\hat{n}-1}$, and for any $a\in N^{dum}$, $f_a(\dot{\V})=0$ by \textbf{NP}.
\end{subcase}
\begin{subcase}\label{case3.2} $N^{\d}=\{i\}$.
In this case, there is only one guru, which is $i$, and any other delegator has $i$ as their guru.
Therefore, for all $j\in N\setminus N^{dum}$, $\d^*(j)=i$.
Assume $j\in N$, such that $d_j=i$.
Then we obtain a bloc game $\V'$ by forming $i$ and $j$ into a bloc $ij$.
By \textbf{BP} and the assumption, $f_i(\dot{\V})+f_j(\dot{\V})=f_{ij}(\V')=1/2^{\hat{n}-2}$.
Then, since \textbf{ET}, $f_i(\dot{\V})=f_j(\dot{\V})=1/2^{\hat{n}-1}$.
Additionally, for all $a\in N^{dum}$, $f_a(\dot{\V})=0$ due to \textbf{NP}, and for all $a\in N\setminus N^{dum}$, $f_a(\dot{\V})=1/2^{\hat{n}-1}$ due to \textbf{ET}.\end{subcase}
\begin{subcase} \label{case3.3}$|N^{\d}|>1$ and $N^{\d}\subset N\setminus N^{dum}$.
In this case there are more than one gurus and at least one delegator delegates to one of the gurus. We can then apply similar arguments to those provided for Case~\ref{case3.1} or Case~\ref{case3.2} to join agents into a bloc and thus prove that $f$ is equivalent to \DB.\end{subcase}
\end{case}

This completes the proof of Claim~\ref{claim:weakcharacterization}.
\end{proof}


\begin{proof}[Proof of Fact \ref{fa:todelegate}]
To prove the fact, it is sufficient to prove that, for any coalition $C\subseteq N$ ($i\in C$), if $i$ is not a swing agent for $C$ in $\V$, neither is she a swing agent for $C$ in $\V'$.
Towards a contradiction, we assume that $i$ is a swing agent for $C$ in $\V'$ even if $i$ is not a swing agent in $\V$.
Then we have that $\sum_{a\in \hat{\D}'(C)}\omega(a)\ge \beta$,
and $\sum_{a\in \hat{\D}'(C\setminus\{i\})}\omega(a)=\sum_{a\in \hat{\D}'(C)}\omega(a)-\omega(i)< \beta$.
Since the only difference between $\d$ and $\d'$ is the strategy of $i$, $\sum_{a\in \hat{\D}(C\setminus\{i\})}\omega(a)=\sum_{a\in \hat{\D}'(C\setminus\{i\})}\omega(a)$, i.e., the weight of $C\setminus \{i\}$ is identical in both LDEs $\V$ and $\V'$.
Moreover, since $i$ is a guru in $\V$ and $i\in C$, it holds that 
$\sum_{a\in \hat{\D}(C)}\omega(a)=\sum_{a\in \hat{\D}(C\setminus\{i\})}\omega(a)+\omega(i)= \sum_{a\in \hat{\D}'(C\setminus\{i\})}\omega(a)+\omega(i)\ge \beta,$
which contradicts the assumption that $i$ is not a swing agent in $\V$.
\end{proof}

\begin{proof}[Proof of Fact \ref{fa:powermonotonicity}]
We prove the fact by showing that $\DB_i(\V)-\DB_j(\V)\le 0$.
By the definition of the delegative simple game of $\V$, we substitute $\DB_i(\V)$ and $\DB_j(\V)$ in $\DB_i(\V)-\DB_j(\V)$ as follows.
\begingroup\makeatletter\def\f@size{8}\check@mathfonts
\begin{align*}
\begin{split}
&2^{n-1}(\DB_j(\V)-\DB_i(\V))\\ = & \sum_{C\subseteq N\ex \{j\}}(\nu'_\V(C\cup\{j\})-\nu'_\V(C))-\sum_{C\subseteq N\ex \{i\}}(\nu'_\V(C\cup \{i\})-\nu'_\V(C))\\
=& \sum_{C\subseteq N\ex \{i,j\}}(\nu'_\V(C\cup\{j\})-\nu'_\V(C)+\nu'_\V(C\cup\{i,j\})-\nu'_\V(C\cup\{i\}))\\
&-\sum_{C\subseteq N\ex \{i,j\}}(\nu'_\V(C\cup\{i\})-\nu'_\V(C)+\nu'_\V(C\cup\{i,j\})-\nu'_\V(C\cup\{j\}))\\
=& 2\sum_{C\subseteq N\ex \{i,j\}}(\nu'_\V(C\cup\{j\})-\nu'_\V(C\cup\{i\})).
\end{split}
\end{align*}\endgroup
Concerning the above equation, for any $C\subseteq N\setminus \{i,j\}$, we consider two possible cases:
\paragraph{(1) $\nu'_V(C\cup\{j\})=0$.}
It implies that $\sum_{a\in \hat{\D}(C\cup\{j\})}\omega(a)<\beta$, and consequently, $\sum_{a\in \hat{\D}(C)}\omega(a)\le \sum_{a\in \hat{\D}(C\cup\{j\})}\omega(a)<\beta$.
Since $d_i=j$ and $j\notin C$, $i\notin \hat{\D}(C\cup\{i\})$.
Therefore, we have $\sum_{a\in \hat{\D}(C\cup\{i\})}\omega(a)=\sum_{a\in \hat{\D}(C)}\omega(a)<\beta$, which implies that $\nu'_\V(C\cup\{i\})=0$.
Hence $\nu'_\V(C\cup\{j\})-\nu'_\V(C\cup\{i\})=0$.

\paragraph{(2) $\nu_V(C\cup\{j\})=1$.}
It implies that $\sum_{a\in \hat{\D}(C\cup\{j\})}\omega(a)\ge \beta$.
Then we consider two possible cases:\\
(i). $\sum_{a\in \hat{\D}(C)}\omega(a)<\beta$.
Since $i\notin \hat{\D}(C\cup\{i\})$, it can be inferred that $\sum_{a\in \hat{\D}(C\cup\{i\})}\omega(a)=\sum_{a\in \hat{\D}(C)}\omega(a)<\beta$, which implies that $\nu'_\V(C\cup\{i\})=0$.
Therefore, $\nu'_\V(C\cup\{j\})-\nu'_\V(C\cup\{i\})>0$.\\
(ii). $\sum_{a\in \hat{\D}(C)}\omega(a)\ge \beta$.
We can obtain that $\sum_{a\in \hat{\D}(C\cup\{i\})}\omega(a)\ge\sum_{a\in \hat{\D}(C)}\omega(a)\ge \beta$, which implies that $\nu'_\V(C\cup\{i\})=1$.
Therefore, $\nu'_\V(C\cup\{j\})-\nu'_\V(C\cup\{i\})=0$.\\
Hence, to sum up, we have$$\sum_{C\subseteq N\ex \{i,j\}}(\nu'_\V(C\cup\{j\})-\nu'_\V(C\cup\{i\}))\ge 0,$$
which implies that 
$$\DB_j(\V)-\DB_i(\V)\ge 0.$$
\end{proof}

\begin{proof}[Proof of Fact \ref{fa:forward}]
It is sufficient to show that if $k$ is a swing agent for coalition $C$ in LDE $\V$, she is also a swing agent for $C$ in $\V'$.
Then we have $\sum_{a\in\hat{\D}(C)}\omega(a)\ge \beta$ and $\sum_{a\in \hat{\D}(C\ex \{k\})}\omega(a)<\beta$.
It implies that $k\in \hat{\D}(C)$, from which we can infer that $i,j\in \hat{\D}(C)$ since $i$ and $j$ are among intermediaries between $k$ and $\d^*(k)$.
Note that the only difference between $\d$ and $\d'$ is the strategy of $k$, and $\d(k)=i$ while $\d'(k)=j$.
Therefore, we obtain $k\in \hat{\D}'(C)$ since all intermediaries between $k$ and $\d'^*(k)$ are contained in $C$, and consequently $\hat{\D}(C)=\hat{\D}'(C)$.
Hence $\sum_{a\in\hat{\D}'(C)}\omega(a)=\sum_{a\in\hat{\D}(C)}\omega(a)\ge \beta$ and $\sum_{a\in \hat{\D}'(C\ex \{k\})}\omega(a)=\sum_{a\in \hat{\D}(C\ex \{k\})}\omega(a)<\beta$, which implies that $k$ is also a swing agent for $C$ in $\V'$.
\end{proof}


\subsection{(In)Existence of Nash equilibria}


\begin{proof}[Proof of Theorem \ref{th:noNE}]
Consider the delegation game defined as follows.
$N=\{1,2,3,4,5,6\}$, $q_1=0.51, q_2=0.7, q_3=0.9, q_4=0.6, q_5=0.7, q_6=0.9$, $\beta=4$, and for the underlying graph $R=\langle N, E\rangle$, $E=\{(1,3),(2,3),(4,6),(5,6)\}$, which can be represented as Fig.~\ref{fig:counterNE}.
Notice that this is a directed graph.
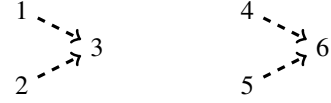
\begin{figure}[h!]
\centering
\begin{tikzpicture}
\path (0,1) node(1){1} (0,0) node(2){2} (1,0.5) node(3){3} (3,1) node(4){4} (3,0) node(5){5} (4,0.5) node(6){6};
\draw[dashed,very thick,->](1)--(3);
\draw[dashed,very thick,->](2)--(3);
\draw[dashed,very thick,->](4)--(6);
\draw[dashed,very thick,->](5)--(6);
\end{tikzpicture}
\caption{The underlying directed graph.}
\label{fig:counterNE}
\end{figure}
\\
We show that in each possible profile, there exists an agent who has incentive to deviate.
As follows, each possible profile with a corresponding deviating agent is listed.
\begin{itemize}
\item The trivial profile $\d^0$, in which each agent is a guru, as shown in Fig~\ref{fig:nodelegate}.\\
(1) $\d^0$. Agent $1$ deviates from $\d^0$ to $\d^1$, $\DB_1(\d^0)=3/16$ to $\DB_1(\d^1)=5/16$, and $u_1(\d^0)=0.1594$ to $u_1(\d^1)=0.1688$.
\begin{figure}[h!]
\centering
\begin{tikzpicture}
\path (0,1) node(1){1} (0,0) node(2){2} (1,0.5) node(3){3} (2,1) node(4){4} (2,0) node(5){5} (3,0.5) node(6){6};
\end{tikzpicture}
\caption{The trivial profile $\d^1$.}
\label{fig:nodelegate}
\end{figure}
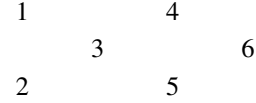
\item Profiles with only one delegating agent, as shown in Fig~\ref{fig:onedelegate}.\\
(2) $d^1$. Agent $4$ deviates from $\d^1$ to $\d^7$, $\DB_1(\d^1)=1/4$ to $\DB_1(\d^1)=3/16$, and $u_1(\d^0)=0.15$ to $u_1(\d^1)=0.1688$.\\
(3) $\d^2$. Agent $1$ deviates from $\d^2$ to $\d^5$, $\DB_1(\d^2)=1/4$ to $\DB_1(\d^5)=3/16$, and $u_1(\d^0)=0.1275$ to $u_1(\d^1)=0.1688$.\\
(4) $\d^3$. Agent $1$ deviates from $\d^3$ to $\d^7$, $\DB_1(\d^3)=1/4$ to $\DB_1(\d^7)=5/32$, and $u_1(\d^3)=0.1275$ to $u_1(\d^7)=0.1406$.\\
(5) $\d^4$. Agent $1$ deviates from $\d^4$ to $\d^8$, $\DB_1(\d^4)=1/4$ to $\DB_1(\d^8)=5/32$, and $u_1(\d^4)=0.1275$ to $u_1(\d^8)=0.1406$.
\begin{figure}[h!]
\centering
\hfill
\subcaptionbox
{$\d^1$}{\begin{tikzpicture}[scale=0.9]
\path (0,1) node(1){1} (0,0) node(2){2} (1,0.5) node(3){3} (2,1) node(4){4} (2,0) node(5){5} (3,0.5) node(6){6};
\draw[very thick,->](1)--(3);
\end{tikzpicture}}
\label{fig:d1}
\hfill
\subcaptionbox
{$\d^2$}{\begin{tikzpicture}[scale=0.9]
\path (0,1) node(1){1} (0,0) node(2){2} (1,0.5) node(3){3} (2,1) node(4){4} (2,0) node(5){5} (3,0.5) node(6){6};
\draw[very thick,->](2)--(3);
\end{tikzpicture}}
\label{fig:d2}
\hfill
\subcaptionbox
{$\d^3$}{\begin{tikzpicture}[scale=0.9]
\path (0,1) node(1){1} (0,0) node(2){2} (1,0.5) node(3){3} (2,1) node(4){4} (2,0) node(5){5} (3,0.5) node(6){6};
\draw[very thick,->](4)--(6);
\end{tikzpicture}}
\label{fig:d3}
\hfill
\subcaptionbox
{$\d^4$}{\begin{tikzpicture}[scale=0.9]
\path (0,1) node(1){1} (0,0) node(2){2} (1,0.5) node(3){3} (2,1) node(4){4} (2,0) node(5){5} (3,0.5) node(6){6};
\draw[very thick,->](5)--(6);
\end{tikzpicture}}
\label{fig:d4}
\caption{Profiles with only one delegating agent.}
\label{fig:onedelegate}
\end{figure}
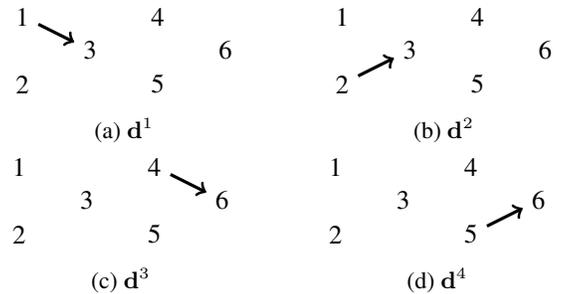

\item Profiles with two delegating agents, as shown in Fig~\ref{fig:twodelegate}.\\
(6) $\d^5$. Agent $2$ deviates from $\d^5$ to $\d^1$, $\DB_2(\d^5)=3/16$ to $\DB_2(\d^1)=1/4$, and $u_2(\d^5)=0.1688$ to $u_2(\d^1)=0.175$.\\
(7) $\d^6$. Agent $5$ deviates from $\d^6$ to $\d^3$, $\DB_5(\d^6)=3/16$ to $\DB_5(\d^3)=1/4$, and $u_2(\d^5)=0.1688$ to $u_2(\d^1)=0.175$.\\
(8) $\d^7$. Agent $4$ deviates from $\d^7$ to $\d^1$, $\DB_4(\d^7)=5/32$ to $\DB_4(\d^1)=1/4$, and $u_4(\d^7)=0.1406$ to $u_4(\d^1)=0.15$.\\
(9) $\d^8$. Agent $5$ deviates from $\d^8$ to $\d^1$, $\DB_5(\d^8)=5/32$ to $\DB_5(\d^1)=1/4$, and $u_5(\d^8)=0.1406$ to $u_5(\d^1)=0.175$.\\
(10) $\d^9$. Agent $4$ deviates from $\d^9$ to $\d^2$, $\DB_4(\d^9)=5/32$ to $\DB_4(\d^2)=1/4$, and $u_4(\d^9)=0.1406$ to $u_4(\d^2)=0.15$.\\
(11) $\d^{10}$. Agent $5$ deviates from $\d^{10}$ to $\d^2$, $\DB_5(\d^{10})=5/32$ to $\DB_5(\d^2)=1/4$, and $u_5(\d^{10})=0.1406$ to $u_5(\d^2)=0.175$.
\begin{figure}[h!]
\centering
\hfill
\subcaptionbox
{$\d^5$}{\begin{tikzpicture}[scale=0.9]
\path (0,1) node(1){1} (0,0) node(2){2} (1,0.5) node(3){3} (2,1) node(4){4} (2,0) node(5){5} (3,0.5) node(6){6};
\draw[very thick,->](1)--(3);
\draw[very thick,->](2)--(3);
\end{tikzpicture}}
\label{fig:d5}
\hfill
\subcaptionbox
{$\d^6$}{\begin{tikzpicture}[scale=0.9]
\path (0,1) node(1){1} (0,0) node(2){2} (1,0.5) node(3){3} (2,1) node(4){4} (2,0) node(5){5} (3,0.5) node(6){6};
\draw[very thick,->](4)--(6);
\draw[very thick,->](5)--(6);
\end{tikzpicture}}
\label{fig:d6}
\hfill
\subcaptionbox
{$\d^7$}{\begin{tikzpicture}[scale=0.9]
\path (0,1) node(1){1} (0,0) node(2){2} (1,0.5) node(3){3} (2,1) node(4){4} (2,0) node(5){5} (3,0.5) node(6){6};
\draw[very thick,->](1)--(3);
\draw[very thick,->](4)--(6);
\end{tikzpicture}}
\label{fig:d7}
\hfill
\subcaptionbox
{$\d^8$}{\begin{tikzpicture}[scale=0.9]
\path (0,1) node(1){1} (0,0) node(2){2} (1,0.5) node(3){3} (2,1) node(4){4} (2,0) node(5){5} (3,0.5) node(6){6};
\draw[very thick,->](1)--(3);
\draw[very thick,->](5)--(6);
\end{tikzpicture}}
\label{fig:d8}
\hfill
\subcaptionbox
{$\d^9$}{\begin{tikzpicture}[scale=0.9]
\path (0,1) node(1){1} (0,0) node(2){2} (1,0.5) node(3){3} (2,1) node(4){4} (2,0) node(5){5} (3,0.5) node(6){6};
\draw[very thick,->](2)--(3);
\draw[very thick,->](4)--(6);
\end{tikzpicture}}
\label{fig:d9}
\hfill
\subcaptionbox
{$\d^{10}$}{\begin{tikzpicture}[scale=0.9]
\path (0,1) node(1){1} (0,0) node(2){2} (1,0.5) node(3){3} (2,1) node(4){4} (2,0) node(5){5} (3,0.5) node(6){6};
\draw[very thick,->](2)--(3);
\draw[very thick,->](5)--(6);
\end{tikzpicture}}
\label{fig:d10}
\caption{Profiles with two delegating agents.}
\label{fig:twodelegate}
\end{figure}
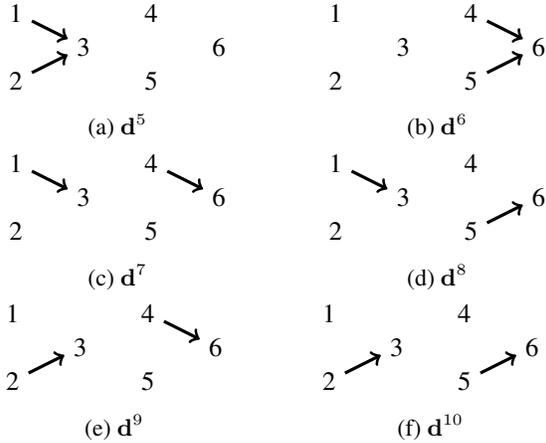
\item Profiles with three delegating agents, as shown in Fig~\ref{fig:threedelegate}.
(12) $\d^{11}$. Agent $2$ deviates from $\d^{11}$ to $\d^6$, $\DB_2(\d^{11})=3/32$ to $\DB_2(\d^6)=3/16$, and $u_2(\d^{11})=0.0844$ to $u_2(\d^6)=0.1313$.\\
(13) $\d^{12}$. Agent $1$ deviates from $\d^{12}$ to $\d^6$, $\DB_1(\d^{12})=3/32$ to $\DB_1(\d^6)=3/16$, and $u_1(\d^{12})=0.0844$ to $u_1(\d^6)=0.0956$.\\
(14) $\d^{13}$. Agent $5$ deviates from $\d^{13}$ to $\d^5$, $\DB_5(\d^{13})=3/32$ to $\DB_5(\d^5)=3/16$, and $u_5(\d^{13})=0.0844$ to $u_5(\d^5)=0.1313$.\\
(15) $\d^{14}$. Agent $4$ deviates from $\d^{14}$ to $\d^5$, $\DB_4(\d^{14})=3/32$ to $\DB_4(\d^5)=3/16$, and $u_4(\d^{14})=0.0844$ to $u_4(\d^5)=0.1125$.
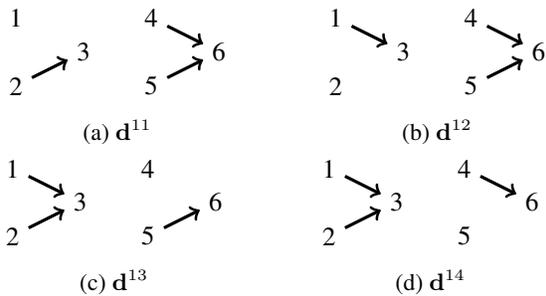
\begin{figure}[h!]
\centering
\hfill
\subcaptionbox
{$\d^{11}$}{\begin{tikzpicture}[scale=0.9]
\path (0,1) node(1){1} (0,0) node(2){2} (1,0.5) node(3){3} (2,1) node(4){4} (2,0) node(5){5} (3,0.5) node(6){6};
\draw[very thick,->](2)--(3);
\draw[very thick,->](4)--(6);
\draw[very thick,->](5)--(6);
\end{tikzpicture}}
\label{fig:d11}
\hfill
\subcaptionbox
{$\d^{12}$}{\begin{tikzpicture}[scale=0.9]
\path (0,1) node(1){1} (0,0) node(2){2} (1,0.5) node(3){3} (2,1) node(4){4} (2,0) node(5){5} (3,0.5) node(6){6};
\draw[very thick,->](1)--(3);
\draw[very thick,->](4)--(6);
\draw[very thick,->](5)--(6);
\end{tikzpicture}}
\label{fig:d12}
\hfill
\subcaptionbox
{$\d^{13}$}{\begin{tikzpicture}[scale=0.9]
\path (0,1) node(1){1} (0,0) node(2){2} (1,0.5) node(3){3} (2,1) node(4){4} (2,0) node(5){5} (3,0.5) node(6){6};
\draw[very thick,->](1)--(3);
\draw[very thick,->](2)--(3);
\draw[very thick,->](5)--(6);
\end{tikzpicture}}
\label{fig:d13}
\hfill
\subcaptionbox
{$\d^{14}$}{\begin{tikzpicture}[scale=0.9]
\path (0,1) node(1){1} (0,0) node(2){2} (1,0.5) node(3){3} (2,1) node(4){4} (2,0) node(5){5} (3,0.5) node(6){6};
\draw[very thick,->](1)--(3);
\draw[very thick,->](2)--(3);
\draw[very thick,->](4)--(6);
\end{tikzpicture}}
\label{fig:d14}
\caption{Profiles with three delegating agents.}
\label{fig:threedelegate}
\end{figure}
\item Due to the restriction by the underlying graph, agent $3$ and agent $6$ can only be gurus.
Therefore in the last possible profile $\d^15$, agents $1$, $2$, $4$ and $5$ are all delegating agents (Fig~\ref{fig:alldelegate}).\\
(16) $\d^{15}$. Agent $5$ deviates from $\d^{15}$ to $\d^{14}$, $\DB_5(\d^{15})=3/32$ to $\DB_5(\d^{14})=5/32$, and $u_5(\d^{15})=0.0844$ to $u_5(\d^{14})=0.1094$.
\begin{figure}[h!]
\centering
\begin{tikzpicture}
\path (0,1) node(1){1} (0,0) node(2){2} (1,0.5) node(3){3} (2,1) node(4){4} (2,0) node(5){5} (3,0.5) node(6){6};
\draw[very thick,->](1)--(3);
\draw[very thick,->](2)--(3);
\draw[very thick,->](4)--(6);
\draw[very thick,->](5)--(6);
\end{tikzpicture}
\caption{Profile $\d^{15}$.}
\label{fig:alldelegate}
\end{figure}
\end{itemize}
Therefore, there is no NE in this delegation game.
\end{proof}


\begin{proof}[Proof of Theorem \ref{thm:NEincomplete}]
Note that in the complete network $G$, any agent can observe and interact with any other agents. 
Hence by Fact~\ref{fa:forward}, no delegation chain is longer than 1.
Then, we prove by construction, that is, we use Algorithm~\ref{algo:NEincomplete} to output a profile and verify that the profile is a NE.\\
To introduce Algorithm~\ref{algo:NEincomplete}, we first introduce a sequence $\sigma$ over $N\setminus \{i^*\}$, where $i^*$ is the agent with the highest accuracy (ties are broken lexicographically).
The sequence is a bijection $\sigma: N\setminus \{i^*\}\longleftrightarrow [n-1]$ ($[k]=\{1,\dots, k\}$ for any $k\in \mathbb{N}_+$).
Let $\sigma(k)$ denote the $k$-th agent in the sequence, where $k\in [n-1]$.
Additionally, for any coalition $C\subseteq N$, let $\sigma_C$ denote the sequence, which is consistent with $\sigma$ but restricted to agents in $C$.\\
\begin{algorithm}[t]
\caption{Nash Equilibrium Construction}
\label{algo:NEincomplete}
\begin{description}
\item[Initialization:]
$i^*$, $C^0=\emptyset$, $C^1=N\ex \{i^*\}$, $\sigma$, $j=1$, $\d:$ for any $i\in N$, $\d(i)=i$.
\item[Delegation:]
\item
\begin{algorithmic}
\WHILE{$C^j\not= C^{j-1}$}
\STATE{$C^{j+1}\la C^j$}
\FOR{$k=1$ to $|C^j|$}
\STATE{$\d(i)\la arg\max_{a\in\{i^*,\sigma_{C^j}(k)\}}u_i((\d_{-\sigma_{C^j}(k)},d'_{\sigma_{C^j}(k)}=a))$}
\IF{$\d(\sigma_{C^j}(k))=i^*$}
\STATE{$C^{j+1}\la C^{j+1}\ex \{\sigma_{C^j}(k)\}$}
\ENDIF
\ENDFOR
\STATE{$j\la j+1$}
\ENDWHILE
\end{algorithmic}
\item[Return:] $\d$
\end{description}
\end{algorithm}
In other words, in Algorithm~\ref{algo:NEincomplete}, in turns determined by $\sigma$, each agent in $N\setminus \{i^*\}$ chooses between being a guru or delegating to $i^*$.
If an agent changes from being a guru to delegating to $i^*$ (to obtain higher utility), she cannot change her strategy anymore.
When no agent wants to change, the algorithm terminates and returns the profile $\d$.\\
Then we verify that $\d$ is a NE.
We show that in $\d$, (1) $i^*$ will not deviate, (2) any delegator will not deviate, and (3) any guru, except for $i^*$, will not deviate.
First, (1) obviously holds since (i) $i^*$ will not change to delegate to any delegator to form a delegating cycle;
(ii) by Fact~\ref{fa:todelegate}, $i^*$ will not delegate to any other guru, otherwise she obtains a lower $\DB$ and inherits a lower accuracy.\\
Next we show (2).
It is clear that a delegator would not change to delegate to another delegator by Fact~\ref{fa:forward}.
Then we show that any delegator will not deviate to be a guru.
We use Lemma~\ref{prop:ne1}, which illustrates if more agents delegate to $i^*$ all current delegators' $\DB$ would not change, and Lemma~\ref{prop:ne2}, which illustrates if more agents delegate to $i^*$, all remained gurus' (except for $i^*$) $\DB$ will be weakly worse off.
\begin{lemma}
\label{prop:ne1}
Given a delegation game $\mathcal{D}$ and a profile $\d$, such that for any $j\in N\setminus N^\d$, $\d(j)=i^*$, we construct another profile $\d'=(\d_{-i},\d'(i)=i^*)$, where $\d(i)=i$ and $i\not=i^*$.
Then we have for all $j\in N$, such that $\d(j)=i^*$, $\DB_j(\d)=\DB_j(\d')$.
\end{lemma}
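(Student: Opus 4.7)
The plan is to establish that, for every delegator $j$ with $\d(j)=i^*$ (necessarily $j\neq i^*$), the collection of coalitions for which $j$ is swing in the delegative simple game of $\d$ coincides with the corresponding collection in $\d'$; the equality $\DB_j(\d)=\DB_j(\d')$ then follows immediately from the definition of $\DB$.

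The preliminary step is a clean structural description of the accrual function $\hat{\D}$ under the standing hypothesis. Because in $\d$ every delegator points directly to $i^*$, with no intermediary in between, a short case split shows that for every $C\subseteq N$ one has $\hat{\D}_{\d}(C)=C$ whenever $i^*\in C$, and $\hat{\D}_{\d}(C)=C\cap G$ otherwise, where $G$ denotes the set of gurus in $\d$. The same formula holds in $\d'$ after replacing $G$ by $G\setminus\{i\}$, since the only modification is that agent $i$ becomes one additional direct delegator to $i^*$. The crucial observation is that on coalitions containing $i^*$, the two formulas coincide verbatim: in $\d$, $i$ is accrued as a guru whenever $i\in C$; in $\d'$, $i$ is accrued because her trustee $i^*$ is present. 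Either way, $\hat{\D}(C)=C$.

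With this description in hand, I would fix a delegator $j$ with $\d(j)=i^*$ and enumerate coalitions $C\subseteq N\setminus\{j\}$ according to whether $i^*\in C$. When $i^*\in C$, both $\nu'_{\d}(C)$ and $\nu'_{\d'}(C)$ equal $1$ iff $|C|\geq\beta$, while $\nu'_{\d}(C\cup\{j\})$ and $\nu'_{\d'}(C\cup\{j\})$ equal $1$ iff $|C|+1\geq\beta$; hence $j$ is swing iff $|C|=\beta-1$, a condition that does not reference the profile. When $i^*\notin C$, adding the delegator $j$ to $C$ leaves $\hat{\D}$ unchanged in either profile, since $j$'s unique trustee $i^*$ is absent, so $j$ fails to be swing. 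Summing the two counts gives identical swing totals in $\d$ and $\d'$, whence $\DB_j(\d)=\DB_j(\d')$.

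The only delicate step is the structural claim about $\hat{\D}$ on coalitions containing $i^*$: one has to rule out that the new edge from $i$ to $i^*$ in $\d'$ creates longer delegation chains that could alter accrual. This concern is dismissed at once by the flat geometry of both profiles---every delegator sits at delegation distance exactly $1$ from $i^*$---so no chain of length $\geq 2$ ever arises and no bookkeeping of intermediaries is needed.
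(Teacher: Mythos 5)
Your proof is correct and follows essentially the same route as the paper's: both split on whether $i^*\in C$, observe that the accrued sets of $\d$ and $\d'$ coincide on coalitions containing $i^*$, and note that a direct delegator to $i^*$ can never be swing for a coalition missing $i^*$. Your explicit description $\hat{\D}(C)=C$ versus $C\cap G$ is just a more concrete rendering of the paper's cardinality argument, so no substantive difference.
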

\begin{proof}
We show that for any delegator $j$ under $\d$, she is a swing agent for any coalition $C\subseteq N$ under $\d$ if and only if she is a swing agent for $C$ under $\d'$.
Since the only difference between $\d$ and $\d'$ is the delegation strategy of $i$, and $i$ is a guru under $\d$ and $\d'(i)=i^*$, for any coalition $C\subseteq N$, such that $i^*\in C$, we have $|\hat{\D}(C)|=|\hat{\D}'(C)|$.
Then note that $j$ can be a swing agent only if she is contained in $\hat{\D}(C)$ under $\d$ or in $\hat{\D}'(C)$ under $\d'$.
Since $\d(j)=\d'(j)=i^*$, we have that if $j$ is contained in $\hat{\D}(C)$ (resp. $\hat{\D}'(C)$), $i^*\in C$ must hold under $\d$ (resp. under $\d'$).
Therefore, we have that under $\d$, $|\hat{\D}(C)|\ge\beta$ and $|\hat{\D}(C\setminus\{j\})|<\beta$ if and only if $|\hat{\D}'(C)|\ge\beta$ and $|\hat{\D}'(C\setminus\{j\})|<\beta$.
Thus $\DB_j(\d)=\DB_j(\d')$.
\end{proof}
\begin{lemma}
\label{prop:ne2}
Given a delegation game $\mathcal{D}$ and a profile $\d$, such that for any $j\in N\setminus N^{\d}$, $\d(j)=i^*$, let $\d'=(\d_{-i},\d'(i)=i^*)$, where $i\in N^{\d}\setminus \{i^*\}$.
Then we have for all $j\in N\setminus N^{\d'}$, $\DB_j(\d)\ge \DB_j(\d')$.
\end{lemma}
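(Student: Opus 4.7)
The plan is to reduce the claim to the two results already available: Lemma~\ref{prop:ne1} (which covers the pre-existing delegators) and Fact~\ref{fa:todelegate} (which covers the one newly created delegator), after making the set $N\setminus N^{\d'}$ explicit. Since $\d$ and $\d'$ are identical except at $i$, and $i$ is a guru in $\d$ but a delegator in $\d'$, we have $N^{\d'}=N^{\d}\setminus\{i\}$, so
\[
N\setminus N^{\d'}=\bigl(N\setminus N^{\d}\bigr)\cup\{i\}.
\]
In particular $i^{*}\in N^{\d'}$ (as $i^{*}\in N^{\d}$ and $i\ne i^{*}$), so $i^{*}$ lies outside the index set of the claim and need not be considered. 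The proof then splits into two disjoint cases according to this decomposition.

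For the old delegators $j\in N\setminus N^{\d}$, the hypothesis of the current lemma gives $\d(j)=i^{*}$, and since $j\ne i$ we also have $\d'(j)=\d(j)=i^{*}$. The pair $(\d,\d')$ satisfies verbatim the hypotheses of Lemma~\ref{prop:ne1}: all delegators of $\d$ point to $i^{*}$, $i$ is a guru of $\d$ distinct from $i^{*}$, and $\d'=(\d_{-i},\d'(i)=i^{*})$. Invoking Lemma~\ref{prop:ne1} on each such $j$ yields $\DB_{j}(\d)=\DB_{j}(\d')$, which is stronger than the required weak inequality.

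For the single new delegator $j=i$, observe that $\d(i)=i$ while $\d'(i)=i^{*}\ne i$, and the two profiles coincide on $N\setminus\{i\}$. This is precisely the setting of Fact~\ref{fa:todelegate}, applied with delegator $i$ switching from voting directly to delegating to $i^{*}$; it gives $\DB_{i}(\d)\ge\DB_{i}(\d')$ immediately. Combining the two cases establishes $\DB_{j}(\d)\ge\DB_{j}(\d')$ for every $j\in N\setminus N^{\d'}$.

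I do not anticipate a substantive obstacle: the only thing to be careful about is verifying that the set $N\setminus N^{\d'}$ really does partition into the two pieces claimed above, and that the hypotheses of Lemma~\ref{prop:ne1} and Fact~\ref{fa:todelegate} transport without modification to the present notation. The natural temptation to instead argue about the remaining gurus of $\d'$ (that is, $N^{\d'}\setminus\{i^{*}\}$) should be resisted, since those agents are not what the lemma quantifies over; the complementary monotonicity statement for those gurus is precisely the content hinted at in the sketch of Theorem~\ref{thm:NEincomplete}, but it is \emph{not} what is asserted here.
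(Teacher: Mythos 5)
Your argument is internally correct for the statement as literally printed: with $N\setminus N^{\d'}=(N\setminus N^{\d})\cup\{i\}$, Lemma~\ref{prop:ne1} gives equality for the pre-existing delegators and Fact~\ref{fa:todelegate} gives the inequality for the single new delegator $i$. But this is not the claim the paper actually proves, and the divergence sits exactly at the point you flagged and then dismissed. The paper's own proof of Lemma~\ref{prop:ne2} analyses an agent $j$ that is explicitly ``a guru in both profiles'', restricts attention to coalitions $C$ with $i\in C$ and $i^*\notin C$, and compares the number of such coalitions for which $j$ is swing under $\d$ (namely ${n^*-2\choose \lceil\beta\rceil-2}$) with the number under $\d'$ (namely ${n^*-2\choose \lceil\beta\rceil-1}$), concluding from $\beta>n/2$ that the former dominates. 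That computation only makes sense if the quantifier ranges over $j\in N^{\d'}\setminus\{i^*\}$, the \emph{remaining gurus}; and this is also how the lemma is used in the proof of Theorem~\ref{thm:NEincomplete}, which cites it as showing that ``all remained gurus' (except for $i^*$) $\DB$ will be weakly worse off'' --- the fact needed to show that a delegator who reverts to guru status ends up with weakly less power than she had before she delegated. The printed ``$j\in N\setminus N^{\d'}$'' is therefore best read as a typo for ``$j\in N^{\d'}\setminus\{i^*\}$''.

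Read literally, the lemma you proved is an immediate corollary of Lemma~\ref{prop:ne1} plus Fact~\ref{fa:todelegate} and contributes nothing new to step (2) of the equilibrium argument; the guru version is the one with substantive content, and it does not follow from those two results. It requires the case split on whether $i^*\in C$, the observation that $|\hat{\D}(C)|=|\hat{\D}'(C)|+1$ when $i^*\notin C$, and the binomial comparison (which is where the hypothesis $\beta>n/2$ enters, as in Lemma~\ref{lemma6}). So while nothing in your two-case decomposition is wrong, the proof of the intended statement is missing.
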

\begin{proof}
We compare the times of any $j\in N\setminus N^{\d'}$ serving as a swing agent under $\d$ and $\d'$.
Since the only difference between $\d$ and $\d'$ is the strategy of agent $i$, it is sufficient to consider coalitions containing $i$.
Then, under $\d$ and $\d'$ respectively, we count the number of coalitions for which $j$ is a swing agent.
Consider two possible cases: (1) $i^*\in C$ and (2) $i^*\notin C$.\\
(1). Since $i$ is a guru under $\d$ and $\d'(i)=i^*$, $\hat{\D}(C)=\hat{\D}'(C)$.
Therefore, $|\hat{\D}(C)|\ge \beta$ and $|\hat{\D}(C\setminus\{j\})|<\beta$, if and only if $|\hat{\D}'(C)|\ge \beta$ and $|\hat{\D}'(C\setminus\{j\})|<\beta$.
That is, $j$ serves as a swing agent for $C$ under $\d$ if and only if $j$ is also a swing agent for $C$ under $\d'$.\\
(2). By $i^*\notin C$ and $\d'(i)=i^*$, $|\hat{\D}(C)|=|\hat{\D}'(C)|+1$, since $i\in \hat{\D}(C)$ while $i\notin \hat{\D}'(C)$ due to the lacking of $i^*$ in $C$.
Then we consider two possible (exhaustive) sub-cases:\\
(i). $j$ is a swing agent for $C$ under $\d$, but becomes a non-swing agent for $C$ under $\d'$.
That is, $|\hat{\D}(C)|=\lceil \beta \rceil$ and $|\hat{\D}(C\ex \{j\})|=|\hat{\D}(C)|-1=|\hat{\D}'(C)|=\lceil\beta\rceil-1$ by the fact that $j$ is a guru in both profiles.
Since $i^*\notin C$, none of delegators is contained in $\hat{\D}(C)$ or $\hat{\D}'(C)$.
Then let $C^*$ denote the set of gurus, except for $i^*$, under $\d$, i.e., $C^*=N^{\d}\ex \{i^*\}$, and $n^*=|C^*|$.
Therefore, $|\hat{\D}(C)|=|\hat{\D}(C\cap C^*)|$.
Thus in this sub-case, the number of such coalition $C$, for which $j$ is a swing agent, is ${n^*-2\choose \lceil\beta\rceil-2}$, i.e., $C$ contains $\lceil \beta \rceil-2$ agents in $N^{\d}\ex \{i,j,i^*\}$, and $\{i,j\}$.\\
(ii). $j$ is a swing agent for $C$ under $\d'$, but is a non-swing agent for $C$ under $\d$.
That is $|\hat{\D}(C)|=\lceil \beta \rceil+1$ and $|\hat{\D}'(C)|=|\hat{\D}(C)|-1=\lceil \beta\rceil$.
Then, the number of $C$, for which $j$ is a swing agent under $\d'$, is ${n^*-2\choose \lceil\beta \rceil-1}$, i.e., $C$ contains $\lceil \beta \rceil -1$ agents in $N^{\d}\ex \{i,j,i^*\}$, and $\{i,j\}$.

\smallskip
Since $\beta\ge \lceil \frac{n}{2}\rceil$, we have ${n^*-2\choose \lceil\beta\rceil-2}\ge {n^*-2\choose \lceil\beta \rceil-1}$ since $n^*\le n-1$.
Therefore, the number of times that $j$ serves as a swing agent under $\d$ is weakly more than that under $\d'$.
\end{proof}
Therefore, in Algorithm~\ref{algo:NEincomplete}, if an agent $i$ chooses to delegate to $i^*$, she has no incentive to change back to be a guru under $\d$ since:\\
(1) by Lemma~\ref{prop:ne1}, as more agents delegate to $i^*$, $i$'s utility does not change since her $\DB$ and $q_{i^*}$ do not change;
(2) by Lemma~\ref{prop:ne2}, if she deviates to be a guru, her utility becomes even lower than that before she chooses to delegate to $i^*$.
Next we show that any delegator will not change to delegate to another guru, by using the following Lemma.
\begin{lemma}

\label{prop:ne3}
Given a delegation game $\mathcal{D}$ and a profile $\d$, such that for any $j\in N\ex N^{\d}$, $\d(j)=i^*$, let $\d'=(\d_{-i},\d'(i)=i')$, where $\d(i)=i^*$ and $i'\in N^{\d}\ex \{i^*\}$.
Then $\DB_i(\d)\ge \DB_i(\d')$.
\end{lemma}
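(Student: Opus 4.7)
The plan is to compute directly, for each profile, the set of coalitions on which $i$ is a swing agent in the delegative simple game, and then compare the two counts by reducing the difference to a single binomial inequality that ultimately hinges on the quota constraint $\beta > n/2$.

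First I would exploit the structural simplicity of $\d$ and $\d'$: since $\d(j)=i^*$ for every delegator $j$ and $\d'$ differs from $\d$ only by re-routing $i$ from $i^*$ to $i'$ (with $N^{\d}=N^{\d'}$), the sets $\hat{\D}(C)$ and $\hat{\D}'(C)$ are transparent. In particular, $i \in \hat{\D}(C)$ iff $i,i^* \in C$, and $i \in \hat{\D}'(C)$ iff $i,i' \in C$. Under $\d$ this immediately gives that $i$ is a swing for $C$ iff $i, i^* \in C$ and $|C| = \lceil \beta \rceil$, yielding $\binom{n-2}{\lceil\beta\rceil - 2}$ many swings. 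Under $\d'$, a short case split on whether $i^*$ and $i'$ belong to $C$ shows that $i$ is a swing only when $i, i' \in C$, splitting into the sub-cases $i^*\in C$ (contributing $\binom{n-3}{\lceil\beta\rceil-3}$ swings, since then $\hat{\D}'(C)=C$) and $i^* \notin C$ (contributing $\binom{n^g - 2}{\lceil\beta\rceil-2} \cdot 2^{n-n^g-1}$ swings, where $n^g = |N^{\d}|$; here the $2^{n-n^g-1}$ factor ranges over the remaining delegators, which are irrelevant to the quota because $i^*\notin C$). The other two cases leave $i \notin \hat{\D}'(C)$ and, since $i$ is a delegator, removing $i$ does not alter $\hat{\D}'$, so $i$ is not swing.

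Next I would invoke Pascal's identity $\binom{n-2}{\lceil\beta\rceil-2} = \binom{n-3}{\lceil\beta\rceil-3} + \binom{n-3}{\lceil\beta\rceil-2}$. After cancellation with the first sub-case count above, the desired inequality $\DB_i(\d) \ge \DB_i(\d')$ reduces to
\[
\binom{n-3}{\lceil\beta\rceil-2} \;\ge\; \binom{n^g-2}{\lceil\beta\rceil-2} \cdot 2^{n-n^g-1}.
\]
Setting $a=n^g-2$, $d=n-n^g-1$, $r=\lceil\beta\rceil-2$ (so $a+d = n-3$), the ratio of binomials equals $\prod_{k=1}^{d}\frac{a+k}{a+k-r}$, and each factor is $\ge 2$ exactly when $a+k \le 2r$; the tightest constraint is $a+d \le 2r$, i.e.\ $\lceil \beta \rceil \ge (n+1)/2$, which follows from $\beta > n/2$ by a short parity case split. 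When $a<r$ the right-hand side is zero and the inequality is trivial.

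I expect the main obstacle to be this last step: the binomial inequality $\binom{a+d}{r} \ge \binom{a}{r}\cdot 2^d$ is where the content of the lemma actually sits, and it specifically relies on the quota exceeding $n/2$. Intuitively, without this super-majority constraint, delegating to a second guru $i'$ could open up too many new winning coalitions through $i'$ alone for $i$ to lose power, and the lemma would fail.
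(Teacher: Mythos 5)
Your proposal is correct and follows essentially the same route as the paper: you count the coalitions for which $i$ is swing under $\d$ and $\d'$, cancel the coalitions containing both $i^*$ and $i'$ (your Pascal-identity step is exactly the paper's observation that such coalitions contribute identically), and reduce to the binomial inequality $\binom{a+d}{r}\ge\binom{a}{r}\,2^{d}$, which the paper isolates as a separate auxiliary lemma and proves by the same factor-by-factor comparison resting on $\lceil\beta\rceil\ge(n+1)/2$, i.e.\ on $\beta>n/2$. If anything, your coalition counts are bookkept slightly more carefully than the paper's, but the argument is the same.
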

\begin{proof}
We also prove the lemma by comparing the number of times that $i$ serves as a swing agent under $\d$ and $\d'$, respectively.
First notice that for any coalition $C$ such that $i^*,i'\in C$, $|\hat{\D}(C)|=|\hat{\D}'(C)|$.
That is, $i$ is a swing agent for $C$ under $\d$ if and only if she is also a swing agent for $C$ under $\d'$.\\
Then we consider any coalition $C$, which contains only one of $i^*$ and $i'$.
Let $C'=N\ex (\D(i^*)\cup\{i'\})$, i.e., all agents except for all delegators under $\d$, $i^*$ and $i'$, and let $n'=|C'|$ and $n^*=|\D(i^*)\ex \{i^*\}|$ (all delegators under $\d$).
Then we consider two cases:\\
(1) $i$ is swing for $C$ under $\d$, but not swing under $\d'$.
We can infer that $i^*\in C$ but $i'\notin C$, since $\d(i)=i^*$ and then $i\in \hat{\D}(C)$.
Therefore, $|\hat{\D}(C)|=\lceil \beta \rceil$ and $i,i^*\in C$, and the number of such coalitions (or times $i$ is swing for $C$ in this case) is ${n'+n^*\choose \lceil\beta \rceil-2}$, i.e., $C$ consists of $\lceil\beta \rceil-2$ agents in $C'\cup(\D(i^*)\ex \{i^*\})$, and $\{i,i^*\}$.\\
(2) $i$ is not swing for $C$ under $\d$, but is swing under $\d'$.
Then we have that $i'\in C$, but $i^*\notin C$.
Therefore, $|\hat{\D}'(C)|=\lceil \beta \rceil$ and $i,i'\in C$,
and the number of such coalitions (or the times of $i$ being a swing agent in this case) is ${n'\choose \lceil\beta \rceil-2}*2^{n^*}$.
That is, $C$ contains $\lceil\beta \rceil-2$ agents in $C'$ as well as $\{i,i'\}$, and since $i^*$ is not in $C$, the emergence of any agent delegating to $i^*$ does not influence the value of $|\hat{\D}(C)|$, thus it leads to $2^{n^*}$ times of ${n'\choose \lceil\beta \rceil-2}$.

\smallskip
Since $\beta\ge \lceil \frac{n}{2}\rceil$, we have ${n'+n^*\choose \lceil\beta \rceil-2}\ge {n'\choose \lceil\beta \rceil-2}*2^{n^*}$ (by Lemma~\ref{lemma6}), thus $\DB_i(\d)\ge \DB_i(\d')$.
\end{proof}


\begin{lemma}
\label{lemma6}
Given $n,n',n^*\in \mathbb{Z}_+$ and $\beta\in (n/2,n]$, such that $n=n^*+n'+2$, we have ${n'+n^*\choose \cb-2}\ge {n'\choose \cb-2}*2^{n^*}$.
\end{lemma}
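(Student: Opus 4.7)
Set $k := \lceil \beta \rceil - 2$ for brevity. The first step of the plan is to unpack the hypothesis on $\beta$ into a usable bound on $k$: from $\beta > n/2 = (n' + n^* + 2)/2$ together with integrality of $\lceil \beta \rceil$ one gets $\lceil \beta \rceil \ge \lfloor n/2 \rfloor + 1$, and hence $k \ge \lfloor (n' + n^*)/2 \rfloor$. This places $k$ at or past the peak of the finite sequence $\binom{n'}{0}, \binom{n'}{1}, \dots, \binom{n'}{n'}$, a fact which will be the main lever below.

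Next, I would apply Vandermonde's identity on the left-hand side and the binomial formula for $2^{n^*}$ on the right-hand side, so as to write both sides as sums of the same ``shape''. This yields $\binom{n' + n^*}{k} = \sum_{j=0}^{n^*} \binom{n^*}{j} \binom{n'}{k-j}$ and $\binom{n'}{k} \cdot 2^{n^*} = \sum_{j=0}^{n^*} \binom{n^*}{j} \binom{n'}{k}$, reducing the claim to the single inequality
$$
\sum_{j=0}^{n^*} \binom{n^*}{j} \left[ \binom{n'}{k - j} - \binom{n'}{k} \right] \ge 0 .
$$

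The heart of the argument is then a symmetric pairing over $j$: using $\binom{n^*}{j} = \binom{n^*}{n^* - j}$, I would regroup the above sum as
$$
\tfrac{1}{2} \sum_{j=0}^{n^*} \binom{n^*}{j} \left[ \binom{n'}{k - j} + \binom{n'}{k - n^* + j} - 2 \binom{n'}{k} \right] .
$$
The two binomial entries inside each bracket are symmetric around $k - n^*/2$, and the quota bound $k \ge (n' + n^*)/2$ places this midpoint at or beyond $n'/2$. Unimodality of $\binom{n'}{\cdot}$ around $n'/2$ then forces the bracket to be non-negative whenever both $k - j$ and $k - n^* + j$ lie in the interval $[n' - k,\, k]$, and summing over such ``central'' values of $j$ yields the bulk of the desired inequality.

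The main obstacle I expect is the handling of the boundary indices, where $k - n^* + j$ drops below $0$ or $k - j$ exceeds $n'$ (so that one of the two binomials vanishes and the bracketed quantity can go negative). A careful parity split on $n' + n^*$, matched with the boundary accounting, is needed: the residual negative contributions have to be absorbed by the $j = 0$ slack term (which is exactly zero) together with the extra surplus picked up from $k$ strictly exceeding $n'/2$ in the favourable parity case. Combining the paired non-negative terms with this boundary adjustment produces the claim.
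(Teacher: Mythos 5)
Your reduction is fine up to the point you yourself flag as the ``main obstacle'': the boundary terms. That obstacle is not a technicality but a genuine, irreparable gap, because in the unfavourable parity case the stated inequality is simply false. Take $n'=1$, $n^*=2$, $n=n'+n^*+2=5$ and $\beta=3\in(5/2,5]$, so $k=\lceil\beta\rceil-2=1$: then $\binom{n'+n^*}{k}=\binom{3}{1}=3$ while $\binom{n'}{k}\,2^{n^*}=4$. In your decomposition this shows up exactly where you predicted trouble: the $j=n^*$ term contributes $\binom{2}{2}\bigl[\binom{1}{-1}-\binom{1}{1}\bigr]=-1$, the $j=0$ slack is zero, and there is no surplus anywhere to absorb it, so the total is negative. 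The failure occurs precisely when $n$ is odd and $\lceil\beta\rceil=(n+1)/2$, i.e. $2k=n'+n^*-1$. Whenever the stronger bound $2\lceil\beta\rceil\ge n+2$ holds (in particular for even $n$), your argument closes immediately and in fact needs no pairing at all: setting aside the trivial case $k>n'$, one then has $n'-k\le k-j\le k$ for every $j\le n^*$, so each Vandermonde summand $\binom{n^*}{j}\binom{n'}{k-j}$ already dominates $\binom{n^*}{j}\binom{n'}{k}$ by unimodality and summing gives the claim.

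For comparison, the paper proceeds quite differently: it cancels the common factorials and compares the products $(n'+1)(n'+2)\cdots(n'+n^*)$ and $2(n'-\lceil\beta\rceil+3)\cdot 2(n'-\lceil\beta\rceil+4)\cdots 2(n'-\lceil\beta\rceil+2+n^*)$ factor by factor, which requires $2\lceil\beta\rceil\ge n'+n^*+4=n+2$; the paper asserts this from $\lceil\beta\rceil\ge\frac{n+1}{2}$, which does not follow when $n$ is odd, so the paper's proof stumbles at exactly the boundary you identified. So your route (Vandermonde convolution plus unimodality, versus the paper's factor-wise product comparison) is genuinely different and, under the corrected hypothesis $2\lceil\beta\rceil\ge n+2$, yields a clean and arguably more transparent proof; but as a proof of the lemma as stated, the boundary step cannot be completed, and no amount of parity bookkeeping will fix it.
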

\begin{proof}
Let $\alpha=\frac{(\cb-2)!(n'-\cb+2)!(n^*+n'-\cb+2)!}{n'!}$.
We assume that $n'-\cb+2\ge 0$, otherwise it obviously holds.
Then 
{\small
\begin{align}
\label{eq:ne31}
\begin{split}
&{n'\choose \cb-2}*2^{n^*}*\alpha\\=&\frac{n'!}{(\cb-2)!(n'-\cb+2)!}*2^{n^*}*\alpha\\
=&2^{n^*}*\underbrace{(n'-\cb+3)(n'-\cb+4)\dots (n'-\cb+2+n^*)}_{n^*}\\
=&\underbrace{2*(n'-\cb+3)2*(n'-\cb+4)\dots 2*(n'-\cb+2+n^*)}_{n^*}.
\end{split}
\end{align}
}

We also have that
\begin{align}
\label{eq:ne32}
\begin{split}
{n'+n^*\choose \cb-2}*\alpha&=\frac{(n'+n^*)!}{(\cb-2)!(n'+n^*-\cb+2)!}*\alpha\\
&=\underbrace{(n'+1)(n'+2)\dots (n'+n^*)}_{n^*}.
\end{split}
\end{align}

We first compare $2*(n'-\cb+2+n^*)$ and $n'+n^*$.
Since $\cb\ge \frac{n+1}{2}$, $2\cb\ge n'+n^*+4$, thus we have
$$2n'+2n^*-2\cb+4\le n'+n^*.$$
Therefore, (\ref{eq:ne31})$\le$(\ref{eq:ne32}).
\end{proof}

Finally, we show (3) any guru $i$ ($i\not= i^*$) under $\d$ will not deviate.
It is obviously that $i$ has no incentive to change to delegate to $i^*$ by the dynamics of Algorithm~\ref{algo:NEincomplete}, or to delegate to any delegator by Fact~\ref{fa:todelegate}.
Then by Lemma~\ref{prop:ne3}, $i$ can obtain even lower utility if she changes to delegate to another guru rather than $i^*$.
Hence $i$ will not deviate from $\d$. This concludes the proof.
\end{proof}


\section{Further Details on Experiments}

We start by providing detailed descriptions of the IBRD  (\emph{iterated better response dynamincs}) and OSI (\emph{one-shot interaction}) algorithms we used in our experiments. The pseudi-code for IBRD is described in Algorithm~\ref{algo:better}:  according to a given sequence, each agent repeatedly checks whether delegating to a randomly picked neighbour increases her payoff. In OSI, starting from the trivial profile, all agents simultaneously choose the delegation strategy that maximizes their utility:
let $\d^0$ be the trivial profile, i.e., $\forall i\in N$, $\d^0(i)=i$, for any $i\in N$, $i$ chooses $d_i= \arg\max_{a\in (E(i)\cup\{i\})}u_i((\d^0_{-i},d_i=a))$. 

\begin{algorithm}
\caption{Iterated Better Response Dynamincs (IBRD)}
\label{algo:better}
\begin{description}
\item[Initialization:]
$\d^0: \forall a\in N, \d^0(a)=a$, $\sigma$, $j=1$, $i=1$, $k=0$.
\item[Round $j$:]
\item \textbf{step 1:} Randomly choose $\ell$ from $E(\sigma(i))\cup\{\sigma(i)\}$, and let $\tilde{\d}^j=(\d^j_{-\sigma(i)},\tilde{\d}^j(\sigma(i))=\ell)$.
\item \textbf{step 2:} If $u_{\sigma(i)}(\tilde{\d}^j)>u_{\sigma(i)}(\d^j)$, $\d^j\la \tilde{\d}^j$ and go to step 3,\\
else if $u_{\sigma(i)}(\tilde{\d}^j)\le u_{\sigma(i)}(\d^j)$ and $k>2(|E(\sigma(i))|+1)$, go to step 3,\\
otherwise go to step 1 and $k=k+1$.
\item \textbf{step 3:} If $i<n$, $i=i+1$ and go to step 1;\\
else if $i=30$ and $\d^j=\d^{j-1}$, return $\d^j$;\\
otherwise $i=1$ and go to Round $j+1$.
\end{description}
\end{algorithm}

\medskip

We used several measurements of the profiles output by these two algorithms.
First, to measure the delegation structures of a given profile,
we focused on measures concerning structural properties of the delegation graphs (ratio of delegators, maximum/average length of all delegation chains) and concerning the distribution of \DB\ values (maximum, minimum and mean values, as well as the Gini coefficient of the distribution of \DB\ values). We also computed the weighted average accuracy of all gurus (not mentioned in the paper).

All experiments were implemented in Python 3.
We use functions from the random module\footnote{ https://docs.python.org/3/library/random.html\# } to generate random numbers, and the random seeds were drawn from the system macOS Catalina.

\subsection{Equipment}

We ran instances on two types of machines.
One is MacBook Pro (13-inch), with a cpu of 2,3 GHz Dual-Core Intel Core i5, and memory of 16 GB 2133 MHz LPDDR3.
The other is EC2 server of c5.large type on Amazon Web Service, and it has 2 vCPUs of 3.1GHz and memory of 4GB.
All instances were run on cpus.

\subsection{Additional plots}

Finally, we provide additional plots to those reported in the paper for both experiments A and B and report on a third experiment (experiment C) we only briefly mentioned in the paper.

\paragraph{Experiment A}

\begin{figure}[t]
\centering
\subcaptionbox
{mean longest chains\label{a:long1}}{\includegraphics[width=3.7cm]{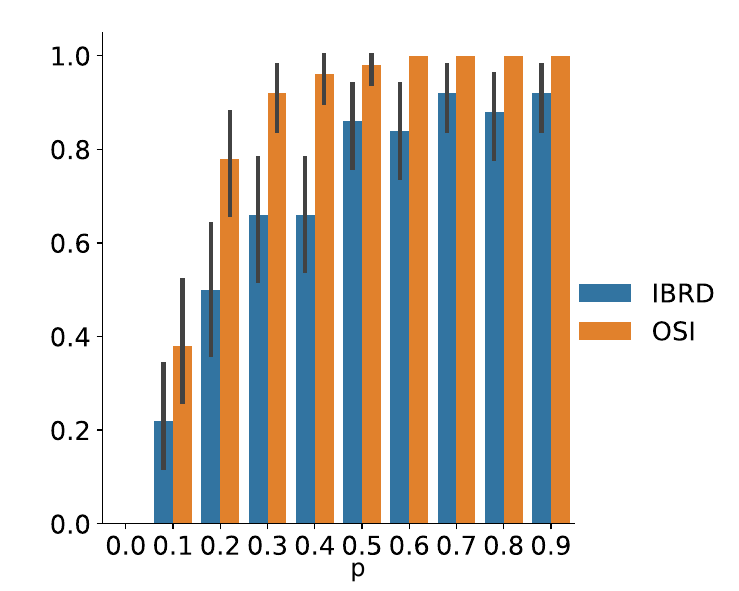}}
\subcaptionbox
{mean average chain lengths \label{a:avglen1}}{\includegraphics[width=3.7cm]{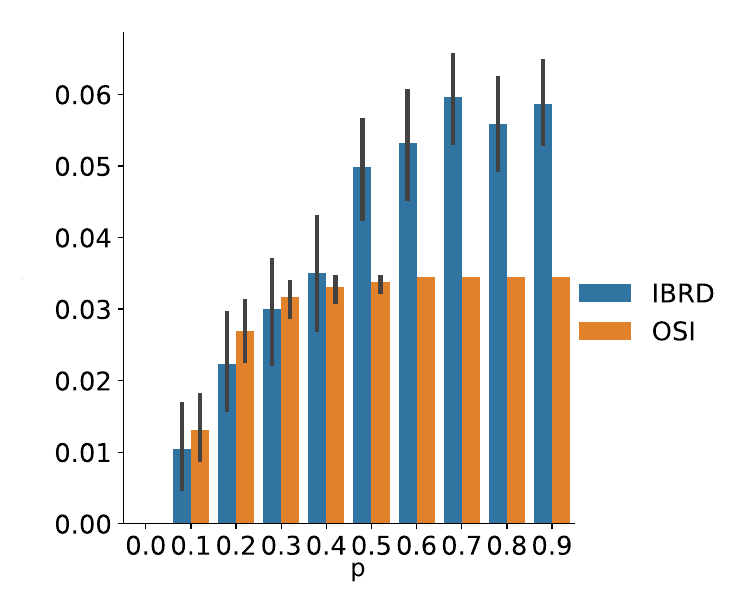}}
\par
\subcaptionbox
{mean maximum $\DB$\label{a:max1}}{\includegraphics[width=3.7cm]{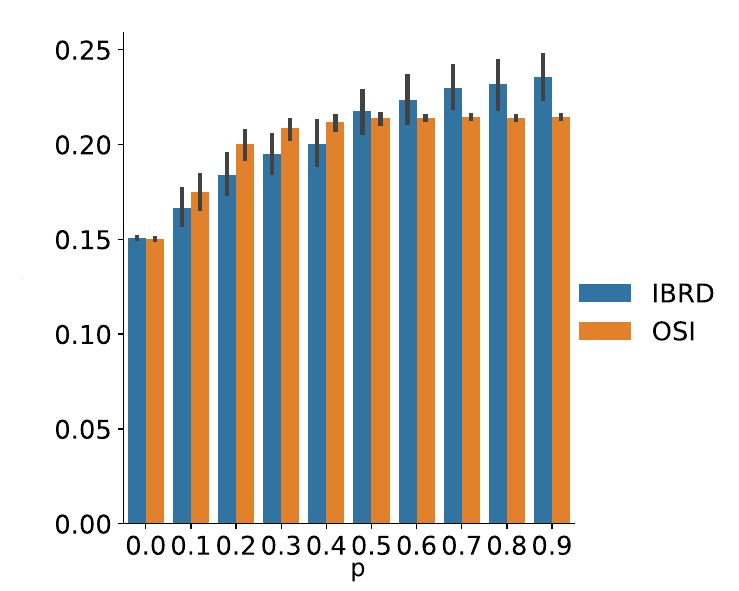}}
\subcaptionbox
{mean minimum $\DB$\label{a:min1}}{\includegraphics[width=3.7cm]{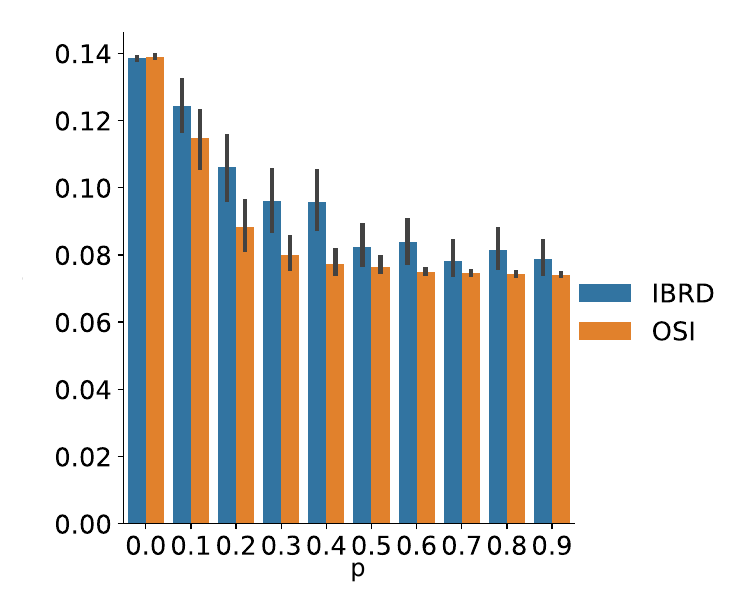}}
\par
\subcaptionbox
{mean accuracy\label{a:acu1}}{\includegraphics[width=3.7cm]{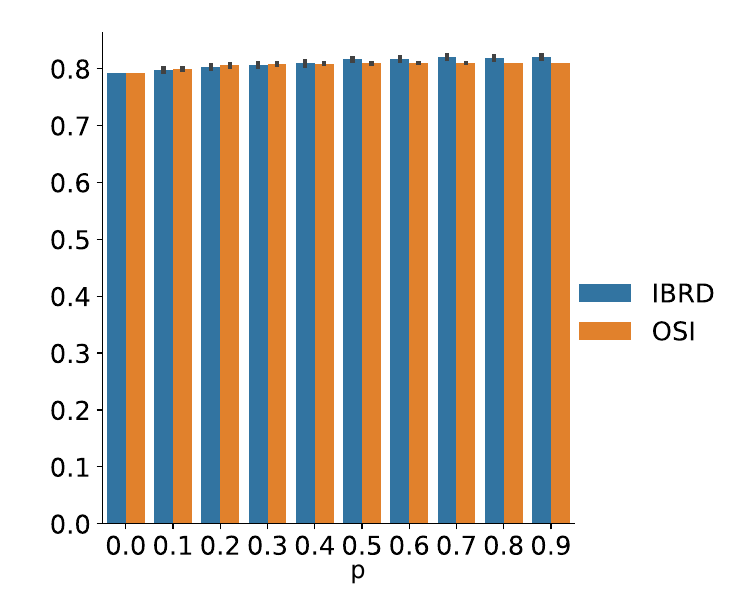}}
\caption{Additional figures for experiment A}
\end{figure}

In addition to the plots in the paper, we show here the mean of the longest chains and the average length of chains for both IBRD and OSI in Fig.~\ref{a:long1} and \ref{a:avglen1}.
Observe that as $p$ becomes larger, both longest chains and average lengths generally become larger, for both IBRD and OSI.
Notice that for both algorithms, the average longest chains are not longer than $1$.
The longest chain of IBRD is averagely shorter than that of OSI, however, the average length of IBRD is larger than that of OSI when $p\ge 0.4$.
We can infer by the above observations that, in OSI, agents choose their strategies based on the trivial profile, and their delegations are not influenced by other agents' strategies.
However, in IBRD, agents can avoid to participate in a long delegation chain because of the multi-round iteration.

We also provide the average maximum/minimum $\DB$s in Fig.~\ref{a:max1} and \ref{a:min1}.
Observe that the maximum $\DB$ increases as $p$ becomes larger, while the minimum $\DB$ decreases.
Recall that when $p>0.4$, the ratio of delegators for IBRD is larger than that for OSI.
The trend of the maximum $\DB$ is consistent: when more agents have access to delegate to high-accuracy agents, those high-accuracy agents are able to collect more delegations, and have more power.
However, the minimum $\DB$ decrease as $p$ becomes larger, which is reverse to the trend of the maximum $\DB$.
Two reasons can contribute to these trends.
One is that the more delegators there are, the more agents lose power as implied in Fact~\ref{fa:todelegate}.
The other reason is that the longest/average length of chains becoming larger also lowers some agents' $\DB$, in accordance with Facts~\ref{fa:powermonotonicity} and \ref{fact:accrual}.

We consider that in delegation games, any rational agent always decides their strategy by approaching a trade-off between accuracy and power.
Then, we also compare the weighted average accuracy of gurus based on the profiles output by these two algorithms (Fig.~\ref{a:acu1}).
We can observe that the average accuracies of IBRD and OSI generally increase as $p$ increases.
The trend is consistent with that of delegator ratio for both algorithms and is in line with that observed in \cite{bloembergen19rational}.

\paragraph{Experiment B}
\begin{figure}[t]
\centering
\subcaptionbox
{mean longest chains\label{b:long1}}{\includegraphics[width=3.7cm]{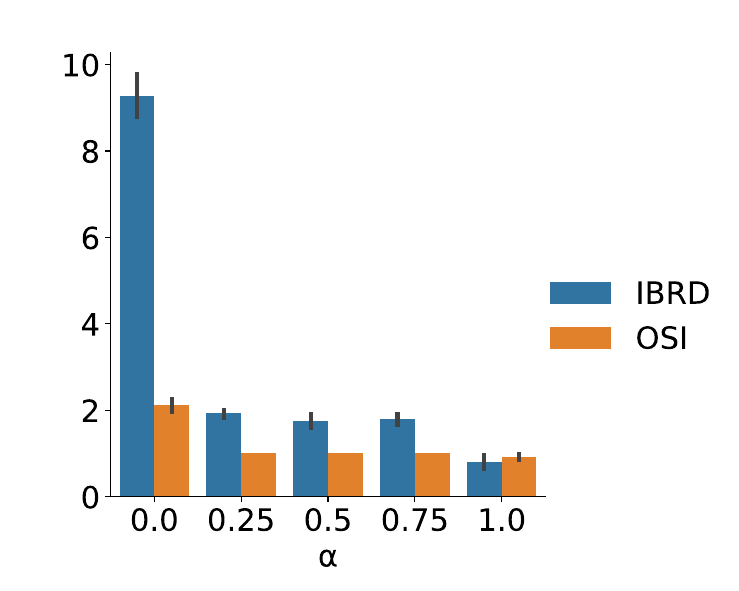}}
\subcaptionbox
{mean average chain lengths \label{b:avglen1}}{\includegraphics[width=3.7cm]{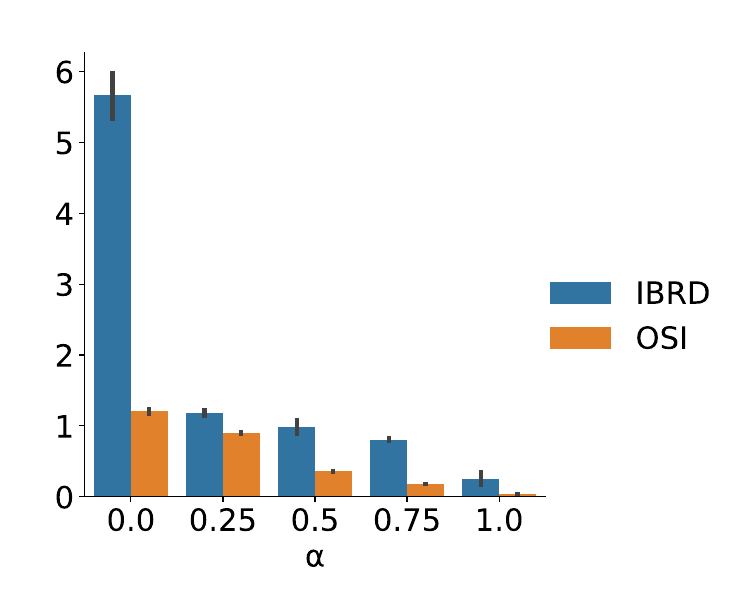}}
\par
\subcaptionbox
{mean maximum $\DB$\label{b:max1}}{\includegraphics[width=3.7cm]{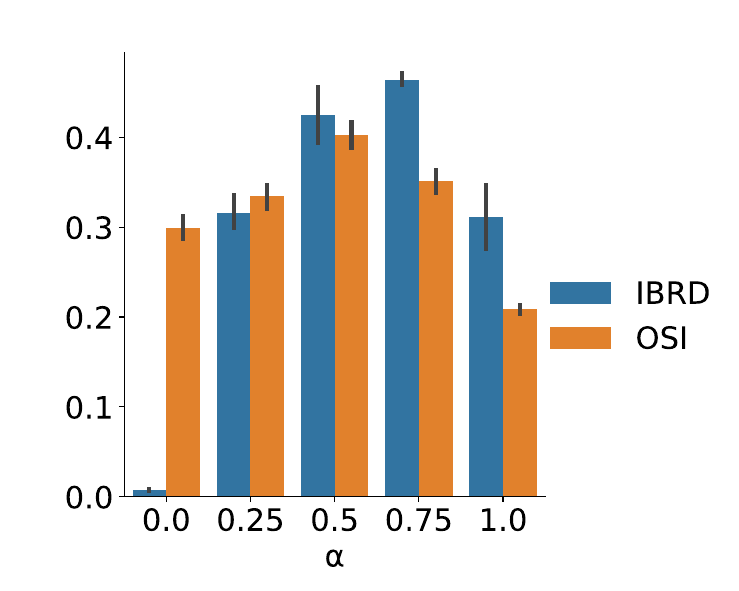}}
\subcaptionbox
{mean minimum $\DB$\label{b:min1}}{\includegraphics[width=3.7cm]{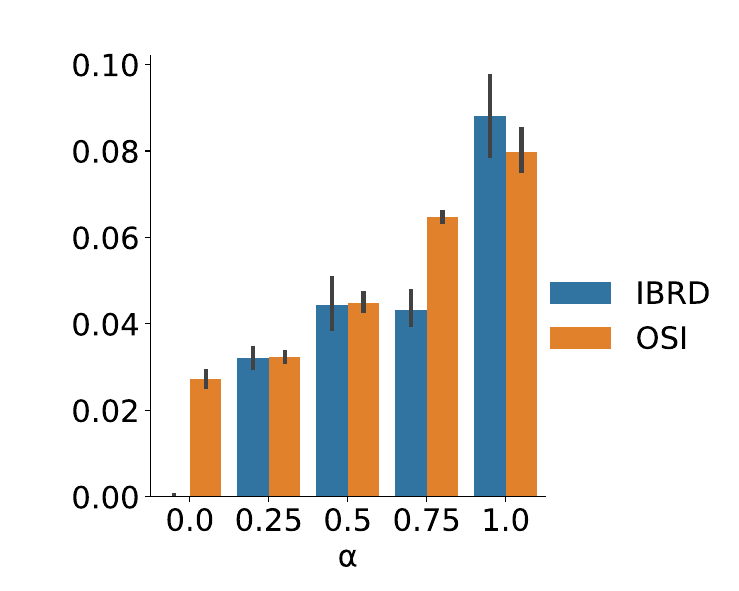}}
\par
\subcaptionbox
{mean accuracy\label{b:acu1}}{\includegraphics[width=3.7cm]{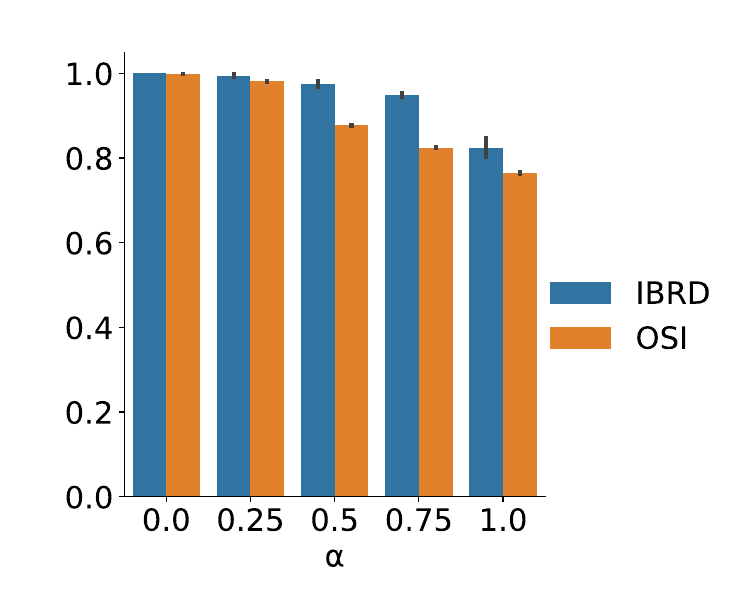}}
\subcaptionbox
{number of converged instances\label{b:con}}{\includegraphics[width=3.7cm]{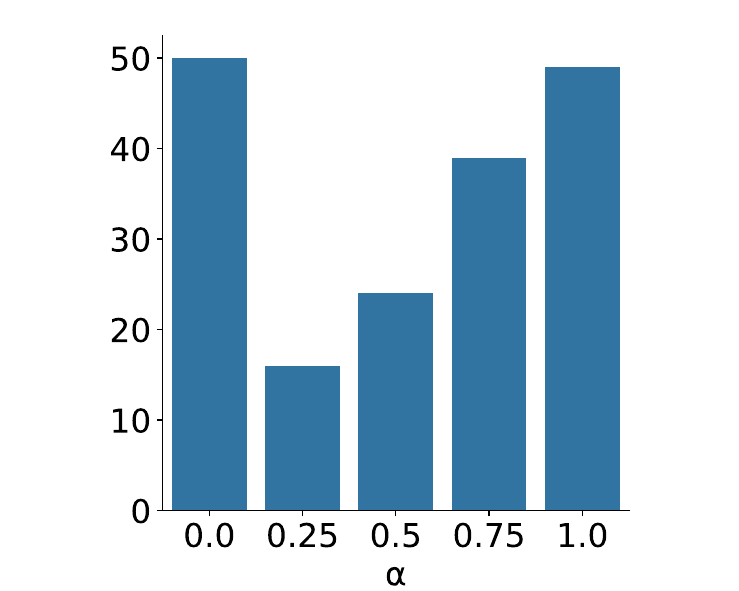}}
\caption{Additional figures for experiment B}
\end{figure}
First, Fig.~\ref{b:long1} and Fig.~\ref{b:avglen1} illustrate the average longest chains and average of average length of chains over 50 instances.
The plots for both IBRD in Fig.~\ref{b:long1} and \ref{b:avglen1} show a decrease as $\alpha$ increases, since delegators far from their gurus have reletively lower $\DB$.
Therefore, as agents attach more weight to power, they have lower incentive to delegate to far gurus.
The plots for OSI also have similar trends, although the reason is different as it only depends on agents' incentive to delegate from the trivial profile.

Then in Fig.~\ref{b:max1} and \ref{b:min1}, we show (average over 50 instances) the maximum and minimum $\DB$s.
Notice that all maximum/minimum $\DB$s of IBRD significantly increase from $A=0$ to $A=0.25$ because of the shortening of the longest/average length of delegation chains due to the weight attached to $\DB$.
The trend coincidences with Fact~\ref{fa:forward}.
Then from $A=0.25$ to $A=0.75$, the ratio of delegators slightly decreases, as well as the longest/average length of delegation chains.
This change makes the maximum and minimum $\DB$s mildly increase.
Finally from $A=0.75$ to $1$, since the delegator ratio drops substantially, the maximum $\DB$ also decreases significantly since gurus lose many delegations.
However, the minimum $\DB$ still increases.

Next, we compare the average accuracies of IBRD and OSI, as shown in Fig.~\ref{b:acu1}.
It can be observed that as more agents delegate, the average accuracy tends to be higher.
The reason is that agents with higher accuracy, and that accrue more delegations, can contribute higher accuracy to the network.
It is worth observing that IBRD outperforms OSI in terms of accuracy because of more agents being delegators in IBRD, and agents having the opportunity to seek (further) gurus with higher accuracies.

Finally, we also depict the number of converged instances over all $\alpha$'s in Fig.~\ref{b:con}.
Since OSI always terminates after one round, we only show the result of IBRD.
When $A=0$, all 50 instances converge but the number substantially decreases to around 17 when $A=0.25$.
When $A\ge 0.25$, the number gradually increases to 50 (when $A=1$).
Recall that the delegator ratio also descends from $A=0.25$ to $1$ by a large extent.
As more delegators exist in the network, the IBRD averagely takes more iterations to converge and the probability of instances that do not converge within 50 iterations in our data set also grows.

\paragraph{Experiment C}

\begin{figure}[t]
\centering
\subcaptionbox
{number of converged IBRD instaces\label{c:con1}}{\includegraphics[width=3.7cm]{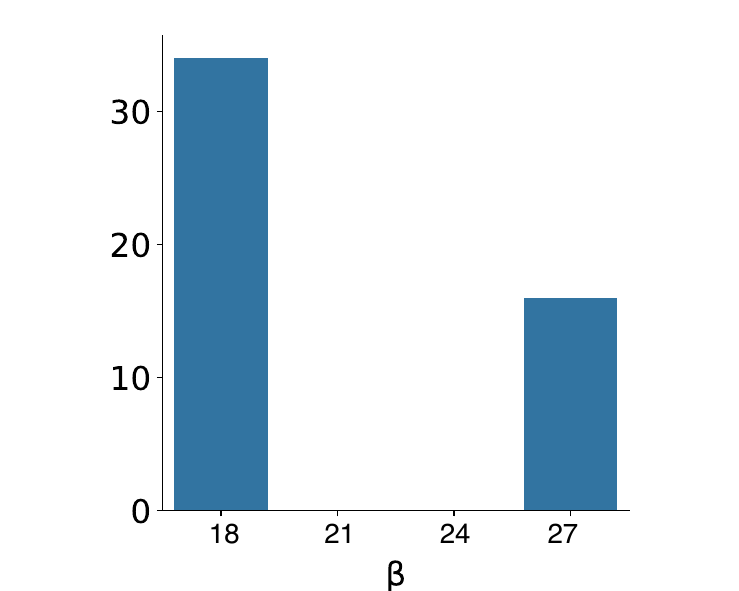}}
\subcaptionbox
{mean delegator ratios \label{c:ratio1}}{\includegraphics[width=3.7cm]{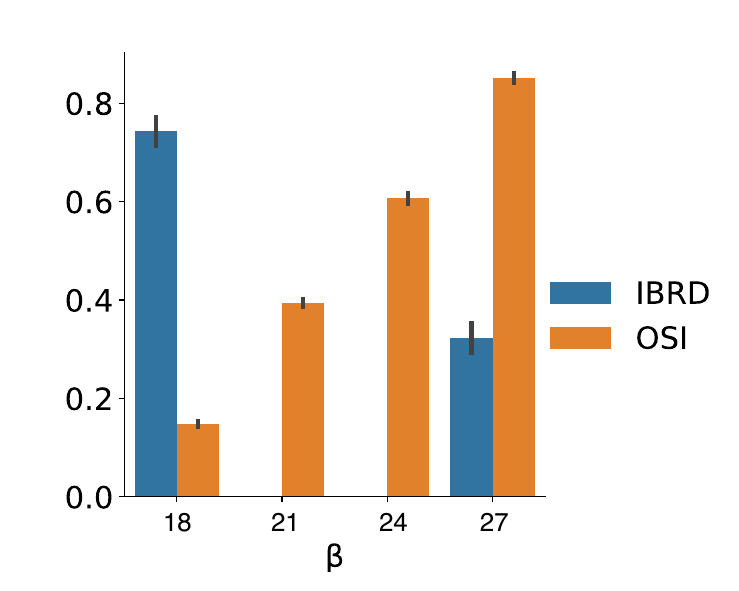}}
\par
\subcaptionbox
{mean average chain lengths\label{c:avglen1}}{\includegraphics[width=3.7cm]{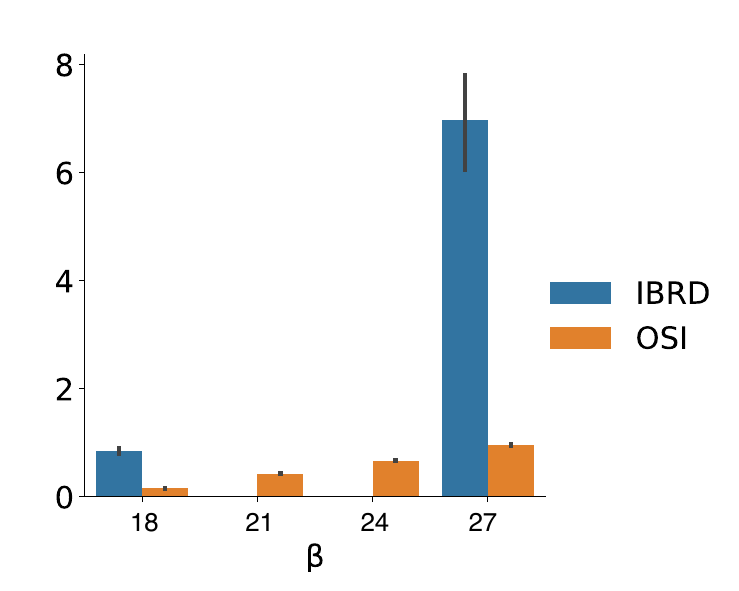}}
\subcaptionbox
{mean longest chains\label{c:long1}}{\includegraphics[width=3.7cm]{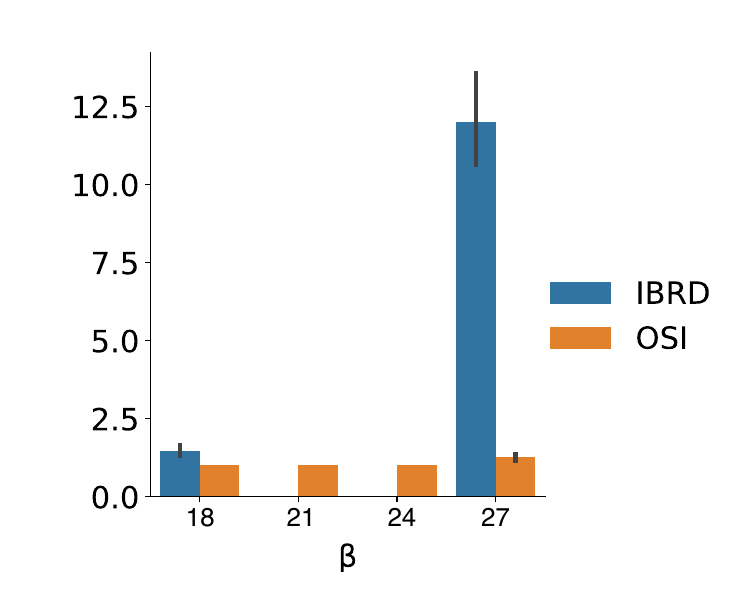}}
\caption{Figures for experiment C}
\end{figure}

In experiment C, we study the agents' behavior by varying the quota $\beta$ in range $\beta=[18,21,24,27]$, that is from majority to unanimity.
However, when $\beta$ is large, especially larger than $\beta=18$, many iterations are needed in order for IBRD to converge as shown in Fig.~\ref{c:con1}. As a result, our data set is based on several instances that have not converged within the stipulated 50 iterations. We can nonetheless also observe some interesting trends in Fig.~\ref{c:ratio1}, \ref{c:avglen1} and \ref{c:long1}.
In Fig.~\ref{c:ratio1}, the delegator ratio of OSI obviously increases linearly as $\beta$ becomes larger.
It implies that agents have more incentive to delegate with larger $\beta$ (and $\beta> n/2$).
The average and longest chains also slightly increase as $\beta$ becomes larger, as shown in Fig.~\ref{c:avglen1} and \ref{c:long1}.
It is also worth mentioning that when $\beta=27$, the average length of delegation chains is around $7$ and the average longest chain is around $12$, which are much longer than that when $\beta=18$.

\clearpage



\end{document}